\documentclass[runningheads]{llncs}

\usepackage[T1]{fontenc}
\usepackage{amsmath,amssymb,mathtools}
\usepackage{graphicx}
\usepackage{booktabs,array}
\usepackage{enumitem}
\usepackage{xcolor}
\usepackage{aliascnt}
\usepackage{microtype}
\usepackage{tikz}
\usetikzlibrary{arrows.meta}
\usetikzlibrary{decorations.pathreplacing}

\usepackage[colorlinks=true,linkcolor=blue!60!black,citecolor=blue!60!black,
            urlcolor=blue!60!black]{hyperref}
\usepackage[capitalize,noabbrev]{cleveref}

\makeatletter

\let\c@theorem\relax   
\let\c@lemma\relax     
\makeatother
\usepackage{amsthm}

\newtheorem{theorem}{Theorem}[section]
\newaliascnt{lemma}{theorem}
\newtheorem{lemma}[lemma]{Lemma}
\aliascntresetthe{lemma}
\crefname{theorem}{Theorem}{Theorems}
\crefname{lemma}{Lemma}{Lemmas}

\newlength{\evcolL}
\newsavebox{\frlbox}\newsavebox{\frrbox}
\newlength{\frht}\newlength{\frtmp}
\newcommand{\frmeasure}{%
  \setlength{\frht}{\ht\frlbox}\addtolength{\frht}{\dp\frlbox}%
  \setlength{\frtmp}{\ht\frrbox}\addtolength{\frtmp}{\dp\frrbox}%
  \ifdim\frtmp>\frht\setlength{\frht}{\frtmp}\fi}
\newcommand{\frslot}[1]{\parbox[c][\frht][c]{\linewidth}{\centering#1}}

\newcommand{\Mod}{\mathsf{Mod}}
\newcommand{\val}{\mathsf{val}}
\newcommand{\prev}{\mathsf{prev}}
\newcommand{\last}{\mathsf{last}}
\newcommand{\enc}{\mathsf{enc}}
\newcommand{\MR}{\mathsf{MR}}
\newcommand{\Prove}{\mathsf{Prove}}
\newcommand{\Verify}{\mathsf{Verify}}
\newcommand{\keys}{\mathsf{keys}}

\newcommand{\tomb}{\bot}
\newcommand{\absent}{\varnothing}
\newcommand{\window}[2]{W_{#1,#2}}
\newcommand{\uninit}{\mathsf{uninit}}
\newcommand{\out}{\mathsf{out}}

\title{You've Got a BUD in Me: Authenticated Reads from
Per-Block Write Logs}
\titlerunning{You've Got a BUD in Me}
\author{Alejandro Ranchal-Pedrosa\inst{1} \and
Cody Littley\inst{1} \and
Ben Marsh\inst{1,2}}

\authorrunning{A. Ranchal-Pedrosa et al.}

\institute{Sei Labs \and University of Portsmouth}
\begin{document}
\maketitle

\begin{abstract}
Blockchains usually pay for authenticated reads by maintaining a structure
that spans the entire state. We show how validators can support historical
membership and exclusion proofs by authenticating each block's writes instead.
A \emph{Block Update Digest} (BUD) commits a write log whose predecessor
pointers link successive modifications of each key. A \emph{SuperBUD}
summarizes last writes over a window; an exponential hierarchy turns long
unchanged intervals into short proofs. The digest count is logarithmic in the
gap within the hierarchy's range, with one additional digest per top-level
window beyond it. We prove soundness against adversarial provers and up to
$f$ Byzantine validators, and completeness for queries anchored by a
post-deployment modification, assuming archive, attestation, and committee
evidence is available. Across a $50\times$ increase in state size, the measured
base-BUD path rises by $1.24\times$, compared with $3.1\times$ and
$69.5\times$ for in-memory and cache-bounded disk-backed Merkle Patricia tries.
On the synthetic trace, two-digest read-layer payloads stay below $800$ bytes,
and warm hash-path verification takes at most $146\,\mu\mathrm{s}$ at p99.
\end{abstract}

\section{Introduction}

A blockchain validator typically writes to a small part of the state, then
updates an authenticated structure spanning every account and storage key.
That structure makes reads easy to prove: a signed root and a path establish a
key's value. It also makes every validator pay for those proofs during
execution. Each of a block's $w$ writes changes an $O(\log N)$ path through
$N$ state entries, with database accesses shaped by the tree. In a standard
EVM client, state-root computation can cost roughly an order of magnitude more
than transaction execution~\cite{megaeth}.

Systems address this cost by deferring the root, assigning it to specialized
nodes, or replacing it with an incremental digest that cannot answer point
queries. QMDB offers a particularly relevant alternative: it makes
authenticated storage efficient and supports historical reads, but still
maintains a state-spanning structure, including reclamation work on the update
path~\cite{qmdb}. This paper asks whether historical point proofs require
that global authenticated structure at all.

Our answer starts with the write log. A \emph{Block Update Digest} (BUD)
commits one block's key-sorted writes, adding the height of each key's previous
modification. If a write at height $b$ points back to $a$, the key was unchanged
between them: the record at $a$ proves its value throughout $[a,b)$. Tombstones
preserve these links across deletion. When no later write is available, a
\emph{SuperBUD} summarizes the last modification of every key written in a
window. An aligned exponential hierarchy combines those summaries into a
proof that the value remained unchanged for $d$ blocks, using $O(\log d)$
attested digests when $d<e^{L+1}$ and $O(L+d/e^L)$ in general. A suitable
single window or a qualifying touch transaction can shorten that proof.

Validators retain ordinary flat state with one $8$-byte metadata field per
entry, build a Merkle tree over each block's writes, and form larger windows
by linear merges. Untrusted archives retain these committed objects and serve
historical proofs. The cost of authenticated construction follows write volume;
ordinary state access still depends on the database and its cache. This
separation motivates two QMDB comparisons: short-run update cost and sustained
reclamation (\cref{sec:eval:commitcost,sec:eval:q7}).

\paragraph{Contributions.}
We give a read protocol that combines per-block commitments, predecessor
records, and mergeable window summaries into historical membership and
exclusion proofs (\cref{sec:model,sec:construction}). We characterize how the
window schedule trades maintenance cost against proof size and waiting time
(\cref{sec:construction,sec:frontier}). We prove soundness against Byzantine
validators and adversarial provers, and completeness for queries with a
post-deployment modification anchor under the stated archive, attestation, and
committee assumptions (\cref{sec:correctness}). Our implementation compares the base-BUD
path with Merkle Patricia tries, NOMT, and QMDB, and measures the hierarchy and
proof trade-offs on synthetic workloads and an Ethereum account-access trace
(\cref{sec:eval}).

\section{Related Work}
\label{sec:related}

QMDB is closest in historical functionality: its authenticated append-only
storage uses back-references to trace a key across versions~\cite{qmdb}.
Those links remain under a state-spanning authenticated structure; BUD links
cross independently attested per-block roots. QMDB also reclaims storage by
relocating live entries on the update path. This makes reclamation part of the
comparison, beyond the cost of producing a root. Our sustained experiment finds
no distinguishable latency penalty from forced reclamation at $N=10^7$, but
records $2.6\times$ larger database-file growth; behavior near the much higher
default eligibility threshold remains unmeasured (\cref{sec:eval:q7}). NOMT
and LVMT also reduce merklization cost while retaining a global authenticated
structure~\cite{nomt,lvmt,mlsm}.

Other systems defer or relocate authentication work. Monad and Ethereum defer
a global root, while MegaETH delegates its maintenance
\cite{monadasync,eip7862,megaeth}. Solana and Sui instead use incremental
multiset digests that update independently of state size but do not provide
point inclusion or exclusion proofs
\cite{simd0215,simd0223,suicheckpoints,bellare1997,lewi2019}.
Ethereum's post-execution access-list proposal commits a per-block list, but
retains the global root and requires revealing the list to prove one entry
\cite{eip7928}.

Stateless-client and vector-commitment designs retain a global dictionary
commitment and move witnesses between participants
\cite{utreexo,bbf,edrax,asvc,pointproofs,hyperproofs,aardvark}; witness churn
remains fundamental~\cite{christ2023}. Authenticated logs and persistent
dictionaries support historical membership by committing snapshots or
cumulative history~\cite{coniks,crosby2009,pads2001,balloon}.
BUDs combine established ingredients into a different read protocol:
predecessor links and window summaries prove unchanged intervals across
independent block commitments, with explicit rules for activation, deletion,
and attestation. Its contribution is this protocol and its
maintenance--proof-size--waiting trade-off. Additional comparisons appear in
\cref{app:related}.

\section{Model and preliminaries}
\label{sec:model}

\paragraph{Adversary, network, and cryptography.}
A probabilistic polynomial-time adversary controls at most $f$ members of
every applicable digest-signing committee, learns their complete state, and
may deviate arbitrarily on their behalf. Within committee epoch $\rho$, write
$\Pi_\rho$ for the committee and $n_\rho=|\Pi_\rho|$. We use the familiar committee choice
$n_\rho=3f+1$ and attestation threshold $\tau=2f+1$ as the reference
configuration. The read protocol itself needs only $f<\tau\le n_\rho-f$:
every accepted digest has an honest signer, and honest signers can produce
it without Byzantine participation. These conditions apply after consensus
finality; they do not replace the underlying consensus assumptions
(\cref{sec:attest,app:attest-details}). The network is asynchronous: messages may be delayed
arbitrarily, while consensus finalizes a common chain. No timing assumption is
used for soundness. Any bound involving an attestation lag $\lambda$ is
conditional on the explicit availability assumption of \cref{sec:attest};
pure asynchrony does not imply such a bound. All protocol objects use an
injective, prefix-free canonical byte encoding $\enc$. Integer widths and
signedness, key and value lengths, and the encodings of $\tomb$, $-1$, and
$\uninit$ are fixed by the protocol. The hash function $H$ is collision
resistant. The signature scheme is existentially unforgeable under
chosen-message attack in the multi-user setting. If attestations use
aggregation, committee public keys are validated at registration and the
aggregate scheme is rogue-key secure (for example through proof-of-possession
registration); verifying the individual signatures is also sufficient.

\paragraph{Committee authentication.}
The verifier starts from a trusted chain checkpoint $C$ and uses a
chain-specific predicate $\mathsf{AuthCom}(C,\rho,\Pi_\rho,\chi_\rho)$ to
check committee evidence $\chi_\rho$, such as authenticated committee
transitions or a consensus light-client proof. We assume the predicate accepts
only the committee selected by the canonical chain descending from $C$.
A fixed trusted committee is the special case in which
$\rho$ never changes and $\chi_\rho$ is empty. The per-epoch corruption bound
and signature assumption apply for every epoch whose signatures the verifier
accepts. A deployment exposed to later compromise of historical signing keys
must additionally use key erasure, signatures with forward security, or a checkpoint
policy that excludes those epochs.

\paragraph{Authenticated maps.}
Both objects use canonical positional Merkle roots over key-sorted entries.
Their domain-separated commitment $\MR_{T,I}(\mathbf z)$ binds the
object type and index, leaf count, every position, padding, and tree shape.
Inclusion opens one position. Exclusion opens the empty root, a boundary, or
two adjacent positions bracketing the query; all openings share one root and
leaf count. Adjacency prevents an omission between nonconsecutive leaves. The
byte-level definition and all four forms appear in \cref{app:maps}.

\paragraph{Blockchain, state, and write logs.}
Consensus finalizes a common totally ordered sequence of blocks at heights
$n = 1,2,\ldots$, and honest validators execute each of them deterministically against a
replicated key-value state, writing and possibly deleting a set of keys. That set, after
resolving repeated writes to the same key by intra-block last-write-wins, is the canonical
write log of block $n$, $\Delta_n=\{(k_1,v_1),\ldots,(k_{w_n},v_{w_n})\}$, over a key space
$\mathcal K$ and a value space $\mathcal V\cup\{\tomb\}$ where $\tomb$ denotes deletion.
Each key appears at most once in $\Delta_n$, every honest validator derives the same log,
and we write $\keys(\Delta_n)$ for its keys and $w_n=|\Delta_n|$. Fix a deployment origin $H_0$, the height at which authenticated logging begins, and let $\sigma_0$ be the canonical state immediately before block $H_0$. The modification set is $\Mod(k)=\{n\ge H_0:k\in\keys(\Delta_n)\}$. Write $\out(v)=v$ unless $v=\tomb$, in which case $\out(v)=\absent$.
For $h\ge H_0$, the canonical value is
\begin{equation}
\label{eq:val}
\val(k,h)=
\begin{cases}
\out(v), &
  \begin{array}{l}
  \text{if $a=\max(\Mod(k)\cap[H_0,h])$ exists}\\[-0.5mm]
  \text{and $(k,v)\in\Delta_a$,}
  \end{array}\\[2mm]
\sigma_0(k), &
  \text{if no such $a$ exists and
  $k\in\operatorname{dom}(\sigma_0)$,}\\[0.5mm]
\absent, &
  \text{otherwise.}
\end{cases}
\end{equation}

The notation separates a deleted value from missing history. The value
marker $\tomb$ means deletion; the read result $\absent$ means that the key
is absent. The predecessor field instead holds a height or one of two
markers: $-1$ for an absent entry, and $\uninit$ for a key present in
$\sigma_0$ with no recorded modification yet. We call the latter key
\emph{unactivated}. Its first record can anchor reads from its own height
onward, but cannot prove earlier absence. Keeping $\uninit$ distinct from
$-1$ prevents that false claim.

\paragraph{Honest-validator state integrity.}
At $H_0$, every honest validator begins from the same canonical application
state and protocol metadata. A validator that joins a later signing committee
first obtains the canonical application state, predecessor metadata, garbage collection state, and required open hierarchy state through the chain's
assumed sound state-sync or checkpoint procedure. Thereafter, honest validators
apply finalized state transitions, activation, and garbage collection
deterministically. The
adversary cannot corrupt an honest validator's memory or persistent storage
except through those transitions. Thus the flat state lacks a public
state-wide authenticated root, but is not assumed immune to arbitrary local
corruption. A validator with corrupted state or predecessor metadata is
Byzantine for the corresponding execution.

\paragraph{Problem statement.}
A membership claim says $\val(k,h)=v\in\mathcal V$; an exclusion claim says $\val(k,h)=\absent$. If the last modification of $k$ at or below $h$ wrote $\tomb$, that record proves exclusion however old it is. If $\Mod(k)\cap[H_0,h]=\emptyset$, there is no record at or below $h$: the key may retain a value from $\sigma_0$, or it may have been absent throughout. A later record with predecessor $-1$ can certify the latter case only over its bounded sentinel interval, while a record with predecessor $\uninit$ deliberately makes no backward claim. We therefore guarantee completeness for the anchored query domain
\(
Q=\{\,(k,h):\Mod(k)\cap[H_0,h]\ne\emptyset\,\},
\) while allowing sentinel certificates for some additional exclusion queries.

A \emph{read system} lets anyone learn $\val(k,h)$ without holding the state and without
trusting whoever answers. An untrusted prover, holding the write logs and validators' attestations, runs $\Prove(k,h)$ and returns a claimed value
$y\in\mathcal V\cup\{\absent\}$ and a certificate $\pi$; a verifier, holding no
application state but able to authenticate the committee, runs $\Verify(k,h,y,\pi)$. We separate cryptographic correctness from availability. The protocol is a read system if it satisfies the following properties. First, \emph{soundness}: for every adversary controlling at most $f$ members of each applicable committee, the probability of producing $(k,h,y,\pi)$ such that $\Verify$ accepts and $y\ne\val(k,h)$ is negligible. Second, \emph{completeness}: for every $(k,h)\in Q$, once the required records, span maps, attestations, and committee evidence are retrievable and every used epoch can be authenticated, an honest prover produces an accepting certificate with $y=\val(k,h)$. The assumptions needed for a bounded delay appear in \cref{sec:security}.

\section{The Protocol: BUDs and SuperBUDs}
\label{sec:construction}

Validators keep the replicated state as a flat map with no public state-wide
authenticated root, subject to the local-integrity assumption of
\cref{sec:model}. An activated entry stores $(v,\last(k))$, where
$\last(k)$ is its most recent modification height, while an unactivated entry
stores $(v,\uninit)$. The implementation may encode $\uninit$ as a reserved
value of the same $8$-byte field. Two digest types do the work: a BUD
commits one block's write log, and a SuperBUD summarizes modifications across
several blocks. An optional touch transaction creates a fresh BUD record on
demand. The construction below specifies how validators build and attest
these objects and how a verifier combines their paths into a historical read.
\Cref{fig:superbud} connects the record format to the hierarchy;
\cref{tab:terms} collects the notation.

\subsection{BUDs and predecessor records}
\label{sec:metadata}

While executing block $n$, the client processes each $(k,v)\in\Delta_n$ in
canonical order. If the entry is activated, it sets
$p\coloneqq\last(k)$. If the entry is present but unactivated, it sets
$p\coloneqq\uninit$. Only if no entry is present does it set $p\coloneqq-1$.
It then emits $(k,v,n,p)$ and writes $(v,n)$ at $k$. Thus $p=-1$ certifies
genuine pre-state absence, whereas $p=\uninit$ records the first
post-deployment modification of a key whose earlier history is not
authenticated.\footnote{Reading the pointer is not an extra access. Execution already
fetches the entry in order to write it, and a deployment maintaining a homomorphic
digest of state must read the previous value anyway to remove it from the
multiset~\cite{lewi2019,bellare1997}.} Since
$\Delta_n$ resolves repeated writes by last-write-wins, a key modified several times
inside block $n$ yields exactly one record. When its predecessor field is
numeric, that field names the last block strictly below $n$ that modified $k$;
an implementation that applies writes as
it executes must therefore read $\last(k)$ before the block's first write to $k$, or
canonicalize the log before emitting records. A deletion writes $(\tomb,n)$ and keeps
the entry as a \emph{tombstone}, so a key deleted and later rewritten still carries a
pointer across the deletion. An entry thus carries one extra field, a block height,
which our implementation stores in $8$ bytes. Garbage collection runs
deterministically after the writes of a block and never removes an entry
written in that block.

Sort the canonical records by key and write
$r_i=(k_i,v_i,n,p_i)$. The block update digest of height $n$ is
\(
U_n=\MR_{\mathsf{BUD},n}(r_1,\ldots,r_{w_n}).
\) Validators attest $U_n$ once block $n$ is final
(\cref{sec:attest}). An inclusion proof opens a position containing
$(k,v,n,p)$; the verifier checks both the key and the embedded height against
the query and BUD index. An exclusion proof uses one of the four canonical
authenticated-map forms above. In particular, an interior exclusion opens
positions $i$ and $i+1$, rather than merely two leaves whose keys surround
$k$. The boundary forms handle keys below the first or above the last record,
and the distinguished empty root handles $w_n=0$.

A numeric fourth field turns
these per-block statements into interval statements. If the record at height
$b$ carries the numeric pointer $a$, nothing touched $k$ in between, so the
value written at $a$ is still the value at every height in $[a,b)$. The
records of an activated key therefore form a backward-linked chain across the digest
sequence.

Garbage collection keeps deleted entries around for a while. A tombstone written at height
$m$ survives until the chain reaches $m+\eta$, where the \emph{retention window} $\eta$ is
a governance parameter; live values are never collected. An absent entry means
that the key was absent from $\sigma_0$ and has not been modified since $H_0$,
or that it was deleted at least $\eta$ blocks ago and untouched since. Its
value is therefore $\absent$ throughout the last $\eta$ heights, clipped at
$H_0$. Once the entry for $k$ is gone, a later write
at height $b$ emits $p=-1$ even though $k$ may have been written long before, so the
sentinel reaches down to $b-\eta$. Both collection and the window are optional
(\cref{app:eta}). Retention bounds only the backward reach of a sentinel proof.
It affects neither soundness nor the claimed completeness for $Q$, whose
certificates begin from actual modification records retained by the archive.

\subsubsection{Touch transactions.}
\label{sec:touch}

If the last write at or below $h$ is at $a$, a certificate must account for
$(a,h]$; with BUDs alone that costs $h-a$ exclusions. A \emph{touch transaction} forces a
write at the current height $t$, preserving the value or writing a tombstone
when the key is absent. Its record $(k,v,t,p)$ is an ordinary anchor when
$h=t$. For $h<t$, a numeric $p\le h$ creates a \textsc{Next} extension from
$p$, while $p=-1$ is a sentinel only when $t-\eta\le h<t$; $p=\uninit$
makes no claim below $t$. Thus touch is a paid constant-size shortcut only
when its predecessor or sentinel interval reaches the target, not a
completeness mechanism for arbitrary history. The cases are expanded in
\cref{app:cert-checks}.

\begin{figure}[t]
\centering
\begin{minipage}[t]{0.34\linewidth}
\vspace{0pt}
\centering
\begin{tikzpicture}[
  x=1cm,y=1cm,
  root/.style={draw,rounded corners=1pt,inner sep=2pt,font=\scriptsize},
  leaf/.style={draw,inner sep=2pt,font=\tiny},
  note/.style={font=\tiny,align=center}
]
\node[root] (u) at (0,2.75) {$U_a$};
\node[leaf] (u1) at (-0.90,2.1) {$(x,v_x,a,p_x)$};
\node[leaf,fill=gray!25] (u2) at (0.90,2.1) {$(k,v,a,p)$};
\draw (u) -- (u1); \draw (u) -- (u2);
\node[note] at (0,1.72) {BUD: values and predecessor links};
\node[root] (sw) at (0,1.10) {$S_W$};
\node[leaf] (s1) at (-0.90,0.45) {$x\mapsto b$};
\node[leaf,fill=gray!25] (s2) at (0.90,0.45) {$k\mapsto a$};
\draw (sw) -- (s1); \draw (sw) -- (s2);
\node[note] at (0,0.07) {SuperBUD: last write in $W$};
\node[note] at (0,-0.23) {Leaves sorted by key ($x<k$).};
\end{tikzpicture}

{\scriptsize (a) Contents of the commitments.}
\end{minipage}\hfill
\begin{minipage}[t]{0.63\linewidth}
\vspace{0pt}
\centering
\begin{tikzpicture}[
  x=0.41cm, y=0.60cm,
  win/.style={draw, line width=0.25pt},
  mem/.style={fill=gray!35},
  exc/.style={fill=gray!15},
  lab/.style={font=\tiny, anchor=east},
  ann/.style={font=\tiny}
]
\fill[mem] (0,2.50)  rectangle (8,3.10);
\fill[exc] (8,1.67)  rectangle (12,2.27);
\fill[exc] (12,0.83) rectangle (14,1.43);
\fill[exc] (14,0.00) rectangle (15,0.60);
\foreach \i in {0,...,15} \draw[win] (\i,0.00) rectangle (\i+1,0.60);
\foreach \j in {0,...,7}  \draw[win] (2*\j,0.83) rectangle (2*\j+2,1.43);
\foreach \j in {0,...,3}  \draw[win] (4*\j,1.67) rectangle (4*\j+4,2.27);
\foreach \j in {0,1}      \draw[win] (8*\j,2.50) rectangle (8*\j+8,3.10);
\draw[win] (0,3.33) rectangle (16,3.93);
\node[lab] at (-0.3,0.30) {$\ell{=}0$};
\node[lab] at (-0.3,1.13) {$1$};
\node[lab] at (-0.3,1.97) {$2$};
\node[lab] at (-0.3,2.80) {$3$};
\node[lab] at (-0.3,3.63) {$4$};
\node[ann] at (4,2.80)  {mem., $k\mapsto a$};
\node[ann] at (10,1.97) {excl.};
\node[ann] at (13,1.13) {ex.};
\draw[dashed,line width=0.25pt] (3.5,-0.47)  -- (3.5,4.08);
\draw[dashed,line width=0.25pt] (14.5,-0.47) -- (14.5,4.08);
\draw[dotted] (8,-0.47) -- (8,2.50);
\fill (3.5,0.30) circle (1.1pt);
\draw[-{Stealth[length=1.5mm]}] (0,-0.47) -- (16.6,-0.47);
\node[ann,anchor=north] at (0,-0.60)    {$H_0$};
\node[ann,anchor=north] at (3.5,-0.60)  {$a$};
\node[ann,anchor=north] at (8,-0.60)    {$s$};
\node[ann,anchor=north] at (14.5,-0.60) {$h$};
\end{tikzpicture}
{\scriptsize (b) Aligned windows over block height.}
\end{minipage}
\caption{From records to interval proofs. (a) A BUD opens the value at $a$;
a SuperBUD opens the last write in its window, without storing that value.
Two-leaf trees illustrate the contents; roots also bind their type and height
or interval. (b) The aligned hierarchy at $e=2$, $L=4$, $\eta=16$, with
BUDs at level $0$. For $d=11$, the staircase opens $k\mapsto a$ in the
shaded level-$3$ window, then excludes $k$ over $[s,h]$ using successively
smaller windows. Parent maps merge their children's entries by taking the
latest height for each key.}
\label{fig:superbud}
\label{fig:bud}
\end{figure}

\subsection{SuperBUDs}
\label{sec:superbuds}

A touch consumes block space and has only the reach stated above. SuperBUDs
instead answer stale queries passively. For an interval $W=[s,t]$, its
\emph{span map} is $M_W=\{k\mapsto\max(\Mod(k)\cap W):
\Mod(k)\cap W\ne\emptyset\}$: it stores each modified key and its final
modification in $W$. Sorting this map by key gives the SuperBUD
$S_W$: the commitment $\MR_{\mathsf{SuperBUD},(s,t)}$ applied to that sorted
sequence, and attested after $t$ is final. Opening $k\mapsto m$ clears the
portion of $W$ above $m$; excluding $k$ clears all of $W$.

For partial maps, let $A\vee B$ keep the greater height for each key. If
$W=\bigcup_i W_i$, then $M_W=\bigvee_iM_{W_i}$, so a parent is built by a
linear merge of its materialized child maps. Roots alone cannot be merged, and
a merged span cannot generally be narrowed; details appear in
\cref{app:superbud-details}.

\subsubsection{Window schedules.}
\label{sec:schedule}

A schedule fixes which intervals receive SuperBUDs. We use the aligned
hierarchy of \cref{fig:superbud}: for base $e\ge2$ and $e^L\le\eta$, a
level-$\ell$ window starts at $H_0+je^\ell$ and ends at
$H_0+(j+1)e^\ell-1$. Level $0$ is a BUD; each higher map is the pointwise
maximum of its $e$ children. Beyond BUDs this closes
$(1-e^{-L})/(e-1)<1/(e-1)$ digests per block, and each write enters at most
one map per level.

For $d=h-a\ge1$, the staircase clears $(a,h]$ by first opening $k\mapsto a$ in
the level $\ell'=\min\{\lfloor\log_e d\rfloor,L\}$ window containing $a$,
then greedily tiles the remainder through $h$ with aligned windows of
decreasing level. The anchoring window closes below $h$, and the number of
exclusions is at most $B(d)$, where
\begin{equation}
\label{eq:staircase}
B(d)=
\begin{cases}
(e-1)\bigl(\lfloor\log_e d\rfloor+1\bigr), & d<e^{L+1},\\
\lfloor d\,e^{-L}\rfloor+(e-1)L, & d\ge e^{L+1}.
\end{cases}
\end{equation}
Every window a staircase uses has closed at or before $h$. Under
$\mathsf A_{\mathrm{att}}(\lambda)$, all of its attestations are therefore
retrievable by height $h+\lambda$. A single later-closing window may instead
yield a two-digest proof, but alignment does not guarantee one; the staircase
is immediate for every anchored query. The construction, boundary case, and
alternative schedules are detailed in \cref{app:superbud-details,app:schedules};
\cref{sec:eval:q5} compares their trade-offs.

\subsection{Attestation and read certificates}
\label{sec:authenticated-reads}

BUDs and SuperBUDs use the same attestation rule. Once their roots are
authenticated, a read reduces to opening a value and showing that no later
write changed it before the requested height.

\subsubsection{Attestation.}
\label{sec:attest}

For a digest $D$, let $c(D)$ be its \emph{closure height}: set
$c(U_n)=n$ and $c(S_W)=\max W$. Let $\rho(D)$ be the committee epoch assigned
to $c(D)$. An honest validator signs only after closure is final, at most once
per type-and-index pair, and only the digest it derives from the finalized
chain.

The signed message is
\[
m_D=\enc\!\left(
 \mathsf{BUDAttest},
 \mathsf{chainID},
 \mathsf{protocolVersion},
 \rho(D),
 \mathrm{type}(D),
 \mathrm{index}(D),
 D
\right).
\]
A digest is attested by signatures from at least $\tau$ distinct members of
$\Pi_{\rho(D)}$; aggregate verification binds the signer set and checks its
membership. Soundness needs $\tau>f$, while production without Byzantine
participation also needs $\tau\le n_\rho-f$. Quorum-intersection uniqueness
can additionally be required via $2\tau>n_\rho+f$, but is stronger than this
paper's soundness argument needs. Soundness and honest-only availability are
feasible for $n_\rho\ge2f+1$; requiring all three gives $n_\rho\ge3f+1$.
\Cref{app:attest-details} derives these ranges. A chain maintaining a separate
replica-agreement digest over the same write log may bind it in the same
signed message; the rule authenticates both digest types, not just SuperBUDs.

For a finite lag guarantee, define $\mathsf A_{\mathrm{att}}(\lambda)$ to hold
when every canonical digest $D$ and $\tau$ valid signatures on $m_D$ are
retrievable by height $c(D)+\lambda$. This assumption includes construction
capacity, responsive signers, and delivery; asynchrony alone does not imply it.
Without it, soundness is unchanged and completeness is conditional on eventual
construction and retrieval.

\subsubsection{Certificate forms and verification.}
\label{sec:cert}

\newcommand{\certdiagram}{%
\begin{figure}[t]
\centering
\begin{tikzpicture}[
  x=0.42cm, y=0.82cm,
  ax/.style={-{Stealth[length=1.4mm]}},
  dig/.style={draw, rounded corners=1.5pt, inner sep=2pt, font=\tiny, fill=gray!25},
  tk/.style={font=\tiny, anchor=north},
  ttl/.style={font=\scriptsize, anchor=west},
  br/.style={decorate, decoration={brace, amplitude=2.5pt, mirror}}
]

\begin{scope}[shift={(0,0)}]
  \node[ttl] at (0,1.42) {(a) \textsc{Empty}};
  \draw[ax] (0,0) -- (12.4,0);
  \node[dig] (ua) at (6,0.72) {$U_a$};
  \draw[thin] (ua) -- (6,0.08);
  \fill (6,0) circle (1.4pt);
  \node[tk] at (6,-0.10) {$a=h$};
\end{scope}

\begin{scope}[shift={(15.2,0)}]
  \node[ttl] at (0,1.42) {(b) \textsc{Next}};
  \draw[ax] (0,0) -- (12.4,0);
  \node[dig] (ub1) at (2,0.72) {$U_a$};
  \node[dig] (ub2) at (10,0.72) {$U_b$};
  \draw[thin] (ub1) -- (2,0.08);
  \draw[thin] (ub2) -- (10,0.08);
  \draw[-{Stealth[length=1.4mm]}] (ub2.north west) to[out=140,in=40]
    node[midway, above, font=\tiny] {$\prev = a$} (ub1.north east);
  \fill (2,0) circle (1.4pt);
  \fill (6,0) circle (1.4pt);
  \fill (10,0) circle (1.4pt);
  \node[tk] at (2,-0.10) {$a$};
  \node[tk] at (6,-0.10) {$h$};
  \node[tk] at (10,-0.10) {$b$};
\end{scope}

\begin{scope}[shift={(0,-2.2)}]
  \node[ttl] at (0,1.42) {(c) \textsc{Cover}};
  \draw[fill=gray!35] (0.6,0.55) rectangle (3.4,1.05);
  \draw[fill=gray!12] (3.6,0.55) rectangle (7.4,1.05);
  \draw[fill=gray!12] (7.6,0.55) rectangle (9.4,1.05);
  \draw[fill=gray!12] (9.6,0.55) rectangle (10.4,1.05);
  \node[font=\tiny] at (2.0,0.80) {$k \mapsto a$};
  \node[font=\tiny] at (5.5,0.80) {excl.};
  \node[font=\tiny] at (8.5,0.80) {excl.};
  \node[font=\tiny] at (2.0,0.28) {$S_W$};
  \draw[ax] (0,0) -- (12.4,0);
  \fill (1,0) circle (1.4pt);
  \fill (10,0) circle (1.4pt);
  \node[tk] at (1,-0.10) {$a$};
  \node[tk] at (10,-0.10) {$h$};
  \node[dig] (uc) at (1,-0.72) {$U_a$};
  \draw[thin] (uc) -- (1,-0.12);
\end{scope}

\begin{scope}[shift={(15.2,-2.2)}]
  \node[ttl] at (0,1.42) {(d) sentinel};
  \draw[ax] (0,0) -- (12.4,0);
  \node[dig] (ud) at (10,0.72) {$U_b$};
  \draw[thin] (ud) -- (10,0.08);
  \node[font=\tiny, anchor=south] at (10,1.05) {$p = -1$};
  \fill (2,0) circle (1.4pt);
  \fill (6,0) circle (1.4pt);
  \fill (10,0) circle (1.4pt);
  \node[tk] at (2,-0.10) {$b-\eta$};
  \node[tk] at (6,-0.10) {$h$};
  \node[tk] at (10,-0.10) {$b$};
\end{scope}

\end{tikzpicture}
\caption{Four certificate forms. (a) The anchor answers at $h=a$.
(b) A successor brackets $h$. (c) Span proofs clear $(a,h]$ while $U_a$
supplies the value. (d) A sentinel certifies bounded absence. All roots are
attested; the text specifies the acceptance checks.}
\label{fig:cert}
\end{figure}
}

\certdiagram

When $k$ was written at $h$ itself, the BUD of that height contains the record and one path
answers. Most heights are not like that, so a certificate has two parts. An \emph{anchor}
exhibits a modification height at or below $h$ and the value written there: an inclusion
path for a record $(k,v,a,p)$ against an attested BUD $U_a$ with $a \le h$. An \emph{extension} rules out every modification in $(a,h]$, making the anchor's
height the last relevant one.
Three extensions are available (illustrated in \cref{fig:cert}):
\begin{enumerate}[leftmargin=3.6em,itemsep=1pt,align=left,labelwidth=3.2em,labelsep=0.4em,label=]
\item[(\textsc{Empty})] Nothing, when $h = a$. The interval is empty and the anchor
answers alone.
\item[(\textsc{Next})] The next write to $k$, an inclusion path for a record
$(k,v',b,a)$ against an attested BUD $U_b$ with $b > h$. Its pointer refers to $a$, so no
block between the two modified $k$.
\item[(\textsc{Cover})] Paths against the SuperBUDs of whichever windows the schedule
commits, and against BUDs, which are the windows of a single height. An exclusion path
clears a whole window lying above $a$; an inclusion path for $k \mapsto a$ clears the part
above $a$ of the window that holds $a$, since being the last modification inside it leaves
the rest write free. The cleared parts must together cover $(a,h]$.

\end{enumerate}
Exclusion claims admit a second kind of anchor, which needs no extension. A
record $(k,v,b,-1)$ is emitted only when the entry for $k$ is genuinely absent
immediately before block $b$. By \cref{lem:chain}(ii), this proves
$\val(k,h)=\absent$ for every
$h\in[b-\eta,b)\cap[H_0,b)$. We call this a \emph{sentinel anchor}. A record
carrying $\uninit$ is an activation record, not a sentinel, and makes no claim
below its own height.

$\Verify$ authenticates every named committee and threshold attestation,
verifies each canonical map opening against its exact type and index, and
checks leaf counts, positions, path lengths, and left--right choices. It then
enforces the anchor and extension conditions above for one common key. In
particular, a \textsc{Cover} must clear every integer height in $(a,h]$; an
ordinary anchor forces $y=\out(v)$; and a sentinel must open
$(k,v,b,-1)$ in $U_b$, satisfy $b-\eta\le h<b$, and force
$y=\absent$. The verifier never interprets $\uninit$ as a sentinel or numeric
predecessor, and rejects heights below $H_0$ or certificates using an
unauthenticated committee epoch. These are the acceptance rules;
\cref{app:maps,app:cert-checks} give their byte encoding and a consolidated
checklist. \Cref{sec:limits} states the trust and availability assumptions.

\subsubsection{Choosing a proof strategy.}
\label{sec:frontier}

BUDs let the prover choose evidence suited to the key's history: a fresh
write needs one path, a successor can certify the gap, and a cold key can use
SuperBUDs. \Cref{tab:frontier} names and prices the five strategies from
\cref{sec:cert}; $a$ is the last modification at or below $h$, and $d=h-a$.
\begin{table}[t]
\caption{Passive strategies and active touch shortcuts. Rows involving an
anchor $a$ use $d=h-a$; waits are additional to ordinary attestation
availability. Only the touch rows carry a fee. For $h<t$, a touch with $p=\uninit$, with
$p>h$, or with $p=-1$ and $h<t-\eta$ gives no constant-digest certificate.
\Cref{eq:staircase} bounds S5 exclusions; the total also includes the ordinary
anchor and, when distinct, the anchoring span map.}
\label{tab:frontier}
\centering
\footnotesize
\setlength{\tabcolsep}{3pt}
\begin{tabular}{@{}>{\raggedright\arraybackslash}p{2.9cm}>{\raggedright\arraybackslash}p{1.65cm}>{\raggedright\arraybackslash}p{2.10cm}>{\raggedright\arraybackslash}p{4.6cm}@{}}
\toprule
Strategy & Digests & Additional wait & Applies when \\
\midrule
Anchor only (S1) & $1$ & none & $h=a$ \\
Sentinel (S2) & $1$ & none & $p=-1$ and $b-\eta\le h<b$ \\
Double BUD (S3) & $2$ & next write & successor carries $\prev=a$ \\
Single SuperBUD (S4) & $2$ & window closure & covering tail is write free \\
Staircase (S5) & \mbox{$\le B(d)+2$} & none & $(k,h)\in Q$, $d\ge1$ \\
\midrule
Touch--\textsc{Next} & $2$ & touch attest. & emitted $p=a\le h<t$ \\
Touch--sentinel & $1$ & touch attest. & $p=-1$ and $t-\eta\le h<t$ \\
Touch at target & $1$ & touch attest. & $h=t$ \\
\bottomrule
\end{tabular}
\end{table}

All five strategies are passive. S5 is available for every query in $Q$
with $d\ge1$ once the digests through $h$ are attested, using at most $B(d)+2$ digests.
S3 and S4 reduce that count to two when a successor or covering window is
available. S4 also needs a write-free tail through the window's closure,
including any heights beyond $h$.

A touch transaction buys a new record but retains the reach limits of
\cref{sec:touch,tab:frontier}: it cannot manufacture sentinel evidence for an
absent query older than $\eta$, or prove a legacy key's pre-activation value. Waiting for S3
has no deadline; a schedule may instead guarantee a covering window within a
bounded wait (\cref{app:schedules}).

\section{Correctness}
\label{sec:correctness}

\newcommand{\fullcorrectness}{%
The proof rests on two binding properties: numeric predecessor fields identify
the preceding modification, and an absent entry implies bounded past absence.
The reserved marker $\uninit$ makes no statement about the pre-deployment
history.

\begin{lemma}[Full pointer invariant]\label{lem:chain:full}
For every height $n\ge H_0$ and key $k$, after block $n$ executes:
\begin{enumerate}[label=(\roman*),leftmargin=2em,itemsep=1pt]
\item if the entry is present and $\last(k)\ne\uninit$, then
\[
\last(k)=\max(\Mod(k)\cap[H_0,n]);
\]
\item if the entry is present and $\last(k)=\uninit$, then
$\Mod(k)\cap[H_0,n]=\emptyset$;
\item if the entry is absent, then
\[
\val(k,h)=\absent
\quad\text{for every }h\in[n-\eta,n]\cap[H_0,n];
\]
\item every record $(k,v,n,p)$ with
$p\notin\{-1,\uninit\}$ satisfies
\[
p=\max(\Mod(k)\cap[H_0,n));
\]
\item every record $(k,v,n,-1)$ satisfies
\[
\val(k,h)=\absent
\quad\text{for every }h\in[n-\eta,n)\cap[H_0,n);
\]
\item every record $(k,v,n,\uninit)$ is the first modification since $H_0$
of an initially present key and makes no claim about heights below $n$.
\end{enumerate}
\end{lemma}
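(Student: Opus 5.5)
The plan is induction on $n$, carrying (i)--(iii) as a joint invariant on the post-block state, and deriving (iv)--(vi) at each step from the invariant for the state before block $n$. The base case is the state just before $H_0$, i.e.\ after ``block $H_0-1$''. By honest-validator state integrity, every entry present in $\sigma_0$ carries $\uninit$, and $\Mod(k)\cap[H_0,H_0-1]=\emptyset$. So (i) is vacuous and (ii) holds. For (iii), an absent entry means $k\notin\operatorname{dom}(\sigma_0)$, so $\val(k,h)$ has no admissible heights at all, since $[n-\eta,n]\cap[H_0,n]$ is empty below $H_0$. We then show that block $n$ preserves (i)--(iii), using the invariant at $n-1$ for every key.

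\textbf{Record properties (iv)--(vi) at block $n$.} Fix a record $(k,v,n,p)$, so $n\in\Mod(k)$. The pointer $p$ is read from the pre-block entry, taken before the block's first write to $k$ or from the canonicalized log, as \cref{sec:metadata} requires. This pre-block entry is exactly the state after $n-1$.
\begin{itemize}
\item If $p$ is numeric, the entry was present and activated, and (i) at $n-1$ gives $p=\max(\Mod(k)\cap[H_0,n-1])=\max(\Mod(k)\cap[H_0,n))$, which is (iv).
\item If $p=-1$, the entry was absent, and (iii) at $n-1$ gives absence on $[n-1-\eta,n-1]\cap[H_0,n-1]$. This contains $[n-\eta,n)\cap[H_0,n)$, which is (v).
\item If $p=\uninit$, (ii) at $n-1$ gives $\Mod(k)\cap[H_0,n)=\emptyset$, so this is the first modification. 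The entry was present and never written since $H_0$, hence present in $\sigma_0$, which is (vi).
\end{itemize}

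\textbf{Preservation of (i)--(iii).} Split keys into those written in block $n$, those untouched but collected by garbage collection, and all others.
\begin{itemize}
\item A written key gets $\last(k)=n=\max(\Mod(k)\cap[H_0,n])$. This holds also for tombstones, and GC never removes it this block, so (i) holds.
\item An untouched key that stays present keeps its metadata, and $\Mod(k)\cap[H_0,n]=\Mod(k)\cap[H_0,n-1]$. So (i) and (ii) carry over.
\item An untouched key that was already absent had, by (iii), absence on $[n-1-\eta,n-1]$. It gains height $n$: since $k\notin\keys(\Delta_n)$, $\val(k,n)=\val(k,n-1)=\absent$, by checking the cases of \eqref{eq:val}. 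The window only shifts upward, so (iii) holds at $n$.
\end{itemize}

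\textbf{Main obstacle: the GC case of (iii).} An entry removed at $n$ was a tombstone written at some $m$ with $n\ge m+\eta$ and untouched since. By (i) at $n-1$, $m=\max(\Mod(k)\cap[H_0,n])$, and the write at $m$ was $\tomb$. Hence by \eqref{eq:val}, $\val(k,h)=\out(\tomb)=\absent$ for all $h\in[m,n]$, which contains $[n-\eta,n]$.

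The delicate points are these:
\begin{itemize}
\item GC must remove only tombstones whose $\last$ is numeric and which have aged past $\eta$, never live values. If the retention rule were phrased as ``survives until the chain reaches $m+\eta$'', with removal at $n>m+\eta$ or at $n=m+\eta$, I would check that either reading yields $m\le n-\eta$.
\item A tombstone created at $m$ cannot carry $\uninit$ after writing, since the write sets $\last=m$.
\item Clipping at $H_0$ is automatic.
\end{itemize}
If $\eta$ is disabled (\cref{app:eta}), this case never arises, and (iii) follows from the first two cases alone.
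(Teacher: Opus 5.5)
Your proof is correct and follows the paper's own argument. Both use induction on height with (i)--(iii) as the state invariant and derive (iv)--(vi) at block $n$ from the invariant after $n-1$, according to which marker was copied. Both then split keys into written, untouched-present, untouched-absent, and collected entries; a collected tombstone written at some $m\le n-\eta$ remains the last modification, so $\val(k,h)=\absent$ on all of $[m,n]$. One small wrinkle arises at $n=H_0$: there $\val(k,H_0-1)$ is undefined, so for an untouched absent key, absence at $H_0$ should be read directly off \cref{eq:val} from $k\notin\operatorname{dom}(\sigma_0)$, which your base case already supplies.
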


\begin{proof}
Initially, a key in $\operatorname{dom}(\sigma_0)$ is present with marker
$\uninit$, while a key outside that domain has no entry. This establishes the
corresponding base cases. For $n=H_0$, the interval asserted in part~(v) is
empty; the remaining claims follow directly after the first transition.

Suppose $n>H_0$ and the claims hold after block $n-1$. If block $n$ writes $k$, there are
three cases. If the entry carries a numeric predecessor $p$, part~(i) at
$n-1$ shows that $p$ is the greatest earlier modification height, proving
part~(iv). If it carries $\uninit$, part~(ii) shows that this is the first
post-deployment modification, proving part~(vi). If no entry is present,
part~(iii) at $n-1$ proves absence throughout
$[n-\eta,n)\cap[H_0,n)$, proving part~(v). In all three cases execution stores
$(v,n)$, establishing part~(i).

If block $n$ does not write $k$, an entry absent after block $n-1$ remains
absent; the interval in part~(iii) shifts forward by one height and
$\val(k,n)=\absent$. A present entry and its marker are unchanged unless the
entry is an expired tombstone that is collected. In the unchanged case,
parts~(i) and~(ii) persist. In the collection case, the tombstone was written
at some $m\le n-\eta$ and remains the last modification of $k$.
Consequently $\val(k,h)=\absent$ for every $h\ge m$, including the interval
required by part~(iii).
\end{proof}

\subsection{Soundness and completeness}
\label{app:security-full}

\begin{lemma}[Authenticated-map binding and exclusion]
\label{lem:merkle-map:full}
Let $D=\MR_{T,I}(\mathbf z)$ be the canonical commitment to a strictly
key-sorted sequence $\mathbf z$. Except with negligible probability, an
accepting inclusion proof at position $i$ opens the canonical entry $z_i$,
and an accepting exclusion proof for $k$ implies that $\mathbf z$ contains no
entry with key $k$.
\end{lemma}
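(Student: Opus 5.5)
The plan is to reduce both claims to collision resistance of $H$ by the standard extraction argument for Merkle trees, relying on the fact that $\MR_{T,I}$ is a canonical, domain-separated positional commitment binding type, index, leaf count, positions, padding and shape. Suppose an adversary outputs an accepting proof that violates the statement. We know the canonical sequence $\mathbf z$, so we can recompute the honest tree for $D$. We then walk the accepted path from the leaf toward the root, comparing each hash input (encoded with the injective, prefix-free $\enc$) with the honest tree's input at the same node. Both computations end at the same root $D$, but by assumption they start from different data. So at some level the two inputs first differ while their outputs agree, and that pair is a collision for $H$. Injectivity and prefix-freeness of $\enc$ are needed here: they ensure that distinct structured inputs really are distinct byte strings. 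Domain separation is needed too: the verifier fixes the path length and the left--right choices from the position and leaf count, so it ensures that a leaf hash can never be confused with an internal-node hash, nor one object type or index with another.

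\textbf{Inclusion.} The verifier checks position $i$, the leaf count and the path length against the committed leaf count. The path therefore follows exactly the honest route to leaf $i$. If the opened leaf differs from $z_i$, the walk above yields a collision. The same argument applies to the committed leaf count, which is bound in the root: a forged count forces a collision at the top-level hash.

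\textbf{Exclusion.} We split into the four forms of \cref{app:maps}.
\begin{enumerate}
\item \emph{Empty root.} The distinguished empty root differs from every nonempty commitment unless $H$ collides, because the leaf count is bound and domain-separated. So $\mathbf z$ is empty.
\item \emph{Left boundary.} This opens position $1$ with key greater than $k$. By the inclusion case that leaf is really $z_1$, and strict sorting puts every key of $\mathbf z$ above $k$.
\item \emph{Right boundary.} Symmetrically, this opens position $w$ (the leaf count) with key less than $k$.
\item \emph{Interior.} This opens positions $i$ and $i+1$ with $k_i<k<k_{i+1}$. Both openings are genuine by inclusion binding. Because the positions are adjacent, strict sorting leaves no index $j$ with $k_j=k$: every $j\le i$ has $k_j\le k_i<k$, and every $j\ge i+1$ has $k_j\ge k_{i+1}>k$.
\end{enumerate}
The verifier requires all openings to share one root and one leaf count. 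This prevents mixing paths from different trees; any such mixing again produces a collision.

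The main obstacle is not the collision extraction itself but making it rigorous for the full byte-level canonical tree. The padding rule and the tree shape for non-power-of-two leaf counts must be determined solely by the committed leaf count, so that "position $i$ with leaf count $w$" fixes one path shape. Otherwise an adversary might present a shorter or differently shaped path that reaches $D$ legitimately under another interpretation. I would handle this by showing first that, given $(T,I,w)$, the verifier's accepted path structure is unique. The negligible bound then follows from a single collision-finding reduction that runs the adversary once and performs the linear comparison described above.
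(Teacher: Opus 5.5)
Your proposal is correct and takes essentially the same route as the paper. Both reduce inclusion binding to collision resistance by comparing the accepted computation with the canonical tree: a mismatch in type, index, or count collides at the outer digest, and otherwise the first level where distinct inputs give equal outputs is a collision. Both then obtain the interior and boundary exclusion forms from canonical openings plus strict sorting, and the empty form from domain separation of empty and nonempty roots. The path-shape concern you flag is already settled by the paper's padding to $M=2^{\lceil\log_2 m\rceil}$ leaves, which makes the shape a function of the committed count $m$.
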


\begin{proof}
Compare an accepting computation with the canonical tree. If the supplied
type, index, or leaf count differs, equality of the outer digests gives a
collision in $H$. Otherwise, if the supplied entry at position $i$ differs
from the canonical entry, either the distinct leaf inputs have the same hash,
or the first level at which distinct child pairs produce the same parent gives
a collision. Thus every accepted opening is canonical at its claimed
position.

For an interior exclusion, the accepted openings are therefore the canonical
entries at consecutive positions $i$ and $i+1$. Strict sorting leaves no entry
with key between them. The boundary cases follow because the opened entry is
the first or last canonical entry. Finally, accepting the empty form for a
nonempty sequence would equate domain-separated empty and nonempty roots and
again yield a collision.
\end{proof}

\begin{theorem}[Soundness]\label{thm:soundness:full}
The read system of \cref{sec:cert} is sound.
\end{theorem}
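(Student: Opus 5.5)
The proof is a reduction. Suppose a PPT adversary outputs $(k,h,y,\pi)$ that $\Verify$ accepts with $y\ne\val(k,h)$ with non-negligible probability. Except with negligible probability, each digest in $\pi$ equals the canonical digest for its type and index, and each opening reveals a canonical entry. Once that is established, the combinatorial content of \cref{lem:chain:full} forces $y=\val(k,h)$.

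\textbf{Step 1: every accepted root is canonical.} Since $\mathsf{AuthCom}$ accepts only the committee of the canonical chain descending from $C$, each named epoch $\rho$ has the correct $\Pi_\rho$. Each accepted digest $D'$ then carries at least $\tau>f$ valid signatures from members of $\Pi_\rho$ on $m_{D'}$, so at least one signer is honest. Suppose that honest member never signed $m_{D'}$. Then we build a multi-user EUF-CMA forger: guess the honest signer, simulate everything else, and output its signature. Under aggregation we rely on rogue-key security with validated keys. Otherwise the honest member did sign $m_{D'}$. Honest validators sign only the digest derived from the finalized chain for that type and index, and the encoding $\enc$ is injective. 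Hence $D'$ is the canonical $U_n$ or $S_W$ for the bound index.

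\textbf{Step 2: every opening is canonical.} Apply \cref{lem:merkle-map:full} to each opening. An accepted inclusion opens the canonical record $(k,v,n,p)\in\Delta_n$ or the span-map entry $k\mapsto m$ with $m=\max(\Mod(k)\cap W)$. An accepted exclusion implies $k\notin\keys(\Delta_n)$, respectively $\Mod(k)\cap W=\emptyset$. Any failure yields a collision in $H$. A union bound over the polynomially many openings keeps the total loss negligible.

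\textbf{Step 3: case analysis on the certificate form.} Each form now certifies the true value.
\begin{itemize}
\item \emph{Sentinel anchor.} Part (v) of \cref{lem:chain:full} gives $\val(k,h)=\absent=y$ for $h\in[b-\eta,b)\cap[H_0,b)$. The verifier enforces this range and rejects heights below $H_0$.
\item \emph{Ordinary anchor} $(k,v,a,p)$ with $a\le h$. It suffices to show $\Mod(k)\cap(a,h]=\emptyset$. Then $a=\max(\Mod(k)\cap[H_0,h])$, and \eqref{eq:val} gives $\val(k,h)=\out(v)=y$.
  \begin{itemize}
  \item \textsc{Empty}: the interval $(a,h]$ is empty, so there is nothing to show.
  \item \textsc{Next}: the opened record $(k,v',b,a)$ with $b>h$ has a numeric pointer. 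Part (iv) gives $a=\max(\Mod(k)\cap[H_0,b))$, so $(a,b)\supseteq(a,h]$ is free of modifications of $k$.
  \item \textsc{Cover}: each excluded window $W$ satisfies $\Mod(k)\cap W=\emptyset$. An inclusion $k\mapsto a$ in a window $W$ means $\max(\Mod(k)\cap W)=a$, so $W\cap(a,\infty)$ is free. The verifier checks that these cleared sets cover every integer in $(a,h]$.
  \end{itemize}
\end{itemize}
The verifier never treats $\uninit$ as numeric, so part (vi) cannot be misused. The main remaining subtlety is the extension record in \textsc{Next} and any BUD exclusions. The verifier must check that the embedded height equals the BUD index and that the key is common across openings. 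Step 2 then ties each record to the correct block.

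\textbf{Main obstacle.} I expect Step 1 to be the hardest part. It must handle many epochs, aggregation, and the once-per-type-and-index signing rule. A forger must be built that embeds the challenge key at a guessed honest member of a guessed epoch, which costs a polynomial factor. We also need that honest validators in later committees hold canonical state via the state-integrity assumption, so that the digest they sign is indeed $\MR$ of the canonical log or span map. I would state it as a claim: except with negligible probability, every attested digest is canonical. I would prove the claim by this hybrid and then discharge Steps 2 and 3 as above.
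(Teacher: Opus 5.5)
Your proposal is correct and matches the paper's proof. It conditions away forgeries and hash collisions, uses $\tau>f$ together with honest signing on the finalized chain and state integrity to make every accepted root canonical, applies \cref{lem:merkle-map:full} to the openings, and runs the same \textsc{Empty}/\textsc{Next}/\textsc{Cover}/sentinel case analysis via parts (iv) and (v) of \cref{lem:chain:full}; your explicit multi-user forger and union bound simply expand the paper's one-line conditioning step.
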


\begin{proof}
Condition on the event that no signature forgery and no hash collision occurs;
the complementary event has negligible probability. Consider any digest $D$
named by an accepting certificate. The verifier authenticates
$\Pi_{\rho(D)}$, and the attestation contains signatures from more than $f$
committee members. At least one signer is therefore honest. By
\cref{sec:attest}, that signer signed only the digest independently derived
from the finalized chain. Deterministic execution and the state-integrity
assumption imply that $D$ is canonical. By \cref{lem:merkle-map:full}, every
accepted inclusion opens the corresponding canonical entry and every accepted
exclusion establishes canonical absence.

First suppose the certificate has an ordinary anchor $(k,v,a,p)$ against
$U_a$. It proves that $(k,v)\in\Delta_a$, and hence that block $a$ wrote $v$
to $k$. We show that every accepted extension establishes
\[
\Mod(k)\cap(a,h]=\emptyset.
\]
For \textsc{Empty}, $h=a$. For \textsc{Next}, the certificate opens a
canonical record $(k,v',b,a)$ with $b>h$. By \cref{lem:chain:full}(iv),
\[
a=\max(\Mod(k)\cap[H_0,b)),
\]
so there is no modification in $(a,b)$ and hence none in $(a,h]$.

For \textsc{Cover}, an exclusion proof for window $W$ establishes
$\Mod(k)\cap W=\emptyset$. An inclusion proof for $k\mapsto a$ establishes
that $a$ is the final modification of $k$ in $W$, and therefore clears
$W\cap(a,\infty)$. A BUD supplies the same statement for a singleton window.
Since $\Verify$ checks that the cleared intervals cover every integer height
in $(a,h]$, no modification occurs there.

Thus, in every ordinary-anchor case,
$a=\max(\Mod(k)\cap[H_0,h])$, and \cref{eq:val} gives
$\val(k,h)=\out(v)$. The output check forces $y=\out(v)$, covering both
membership and tombstone exclusion.

It remains to consider a sentinel anchor. Its canonical record has the form
$(k,v,b,-1)$, and $\Verify$ requires $b-\eta\le h<b$. By
\cref{lem:chain:full}(v), $\val(k,h)=\absent$, and the sentinel output check forces
$y=\absent$. A record carrying $\uninit$ cannot enter this case. These cases
exhaust all accepting certificates.
\end{proof}

Completeness is claimed only for $Q$. A query outside $Q$ may concern either
an untouched member of $\sigma_0$ or a key absent throughout $[H_0,h]$. A
later $\uninit$ record makes no backward claim and cannot certify the first
case. A later record $(k,v,b,-1)$ certifies the second case only when
$h<b\le h+\eta$. In particular, a touch at height $t$ supplies usable
sentinel evidence only if $t\le h+\eta$; it cannot repair an arbitrarily old
query.

Assume that the canonical archive objects, attestations, and committee
evidence needed by a query in $Q$ are retrievable, and that the verifier can
authenticate every epoch used. For a finite delay, also assume
$\mathsf A_{\mathrm{att}}(\lambda)$, continued archive availability, and
committee authentication by height $h+\lambda$.

\begin{theorem}[Completeness]
\label{thm:completeness:full}
The read system of \cref{sec:cert} is complete. Under the finite-delay
assumptions above, a certificate for every $(k,h)\in Q$ is available by the
time the chain finalizes height $h+\lambda$.
\end{theorem}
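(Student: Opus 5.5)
The plan is to exhibit, for each $(k,h)\in Q$, an explicit certificate built from canonical objects, and then check that $\Verify$ accepts it. Since $(k,h)\in Q$, the anchor $a=\max(\Mod(k)\cap[H_0,h])$ exists, and by the definition of the canonical write log there is a unique $(k,v)\in\Delta_a$. The honest prover retrieves the canonical BUD $U_a$ together with its attestation. Because $U_a$ is the honest root over the sorted records of block $a$, it contains the record $(k,v,a,p)$ at some position $i$. Opening position $i$ gives an inclusion path that passes the type, index, leaf-count and position checks. The anchor-output rule then requires $y=\out(v)$, and by \cref{eq:val} this equals $\val(k,h)$.

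The extension is split into cases on $d=h-a$. If $d=0$, \textsc{Empty} suffices. If $d\ge1$, the prover uses the staircase (S5) of \cref{sec:schedule}.
\begin{itemize}
\item First it takes the level-$\ell'$ window $W_0\ni a$ with $\ell'=\min\{\lfloor\log_e d\rfloor,L\}$. Alignment gives $|W_0|=e^{\ell'}\le d$, so $\max W_0\le h$.
\item By maximality of $a$, no modification of $k$ lies in $(a,h]$. Hence $M_{W_0}(k)=a$, and the canonical $S_{W_0}$ contains the leaf $k\mapsto a$, which the prover opens.
\item It then tiles $(\max W_0,h]$ greedily with aligned windows of decreasing level, using BUDs at level $0$.
\item Each tile lies in $(a,h]$, so $k\notin M_W$. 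The canonical sorted map therefore admits one of the four exclusion forms of \cref{app:maps}: empty, left or right boundary, or adjacent positions $i,i+1$ bracketing $k$.
\end{itemize}
Together the cleared parts cover every integer in $(a,h]$, which is exactly the \textsc{Cover} check.

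The main obstacle I expect is showing that the greedy tiling always exists and ends exactly at $h$. The concern is windows that straddle $h$ or run past the hierarchy range $L$. My first attempt is to argue by the aligned structure that every integer $x$ sits at the start of some aligned window of level $\nu_e(x-H_0)$, capped at $L$. The greedy step at position $x+1$ then picks the largest level $\ell\le L$ with $e^\ell\mid (x+1-H_0)$ and $x+e^\ell\le h$. Level $0$ always qualifies, so the greedy step never stalls and the process terminates at $h$. If $d\ge e^{L+1}$, level-$L$ windows are used repeatedly in the middle, and this is the case $\lfloor d e^{-L}\rfloor$ in \cref{eq:staircase}. The count bound itself is not needed for completeness; we need only existence and that every window closes at or before $h$.

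For availability, every digest used has closure height $c(D)\le h$: $U_a$ has $c=a\le h$, and the tiles close at or before $h$. So $\mathsf A_{\mathrm{att}}(\lambda)$ makes every digest and its $\tau$ signatures retrievable by height $h+\lambda$. These signatures exist because honest signers derive the canonical digest, and $\tau\le n_\rho-f$ honest members suffice. The verifier can authenticate each epoch $\rho(D)$ by the assumed committee evidence, and the archive supplies the span maps needed for the openings. Under the plain eventual-availability hypothesis the same certificate becomes retrievable eventually, which gives the first sentence of the theorem. Under the finite-delay assumptions it is available by $h+\lambda$, which gives the second.
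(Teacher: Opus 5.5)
Your proposal is correct and follows the paper's proof essentially step for step. You anchor at $a=\max(\Mod(k)\cap[H_0,h])$, use \textsc{Empty} when $d=0$, and otherwise open $k\mapsto a$ in the level-$\ell'$ aligned window containing $a$ (which in fact closes strictly below $h$, since $\max W_0\le a+e^{\ell'}-1\le h-1$) and tile the rest with closed, write-free aligned windows, so every digest closes by $h$ and is retrievable by $h+\lambda$. Your left-to-right greedy tiling is a harmless variant of the paper's level-by-level base-$e$ decomposition, and when $\ell'=0$ the anchoring window is just $U_a$ itself, so no separate SuperBUD opening is needed, as the paper notes.
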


\begin{proof}
Take $(k,h)\in Q$ and let
\[
a=\max(\Mod(k)\cap[H_0,h]).
\]
The honest prover can open the canonical record $(k,v,a,p)$ against $U_a$.
If $a=h$, this anchor with \textsc{Empty} is an accepting certificate.

Suppose $a<h$ and write $d=h-a$. Then
$\Mod(k)\cap(a,h]=\emptyset$. Let
\[
\ell'=\min\{\lfloor\log_e d\rfloor,L\}
\]
and let $W^\star$ be the level-$\ell'$ aligned window containing $a$. Since
$e^{\ell'}\le d$, the right endpoint of $W^\star$ lies strictly below $h$.
If $\ell'>0$, its span map contains $k\mapsto a$; if $\ell'=0$, the ordinary
anchor supplies the corresponding singleton statement.

Starting immediately after $W^\star$, the greedy base-$e$ decomposition
partitions the remaining heights through $h$ into aligned windows of
nonincreasing level. None contains a modification of $k$, so each supplies an
exclusion proof. Together with the anchoring inclusion, these paths clear all
of $(a,h]$.

Every window used by the staircase closes at or before $h$. Under
$\mathsf A_{\mathrm{att}}(\lambda)$, its attestation is retrievable by height
$h+\lambda$. The prover returns $y=\out(v)$, which equals $\val(k,h)$ by
\cref{eq:val}.
\end{proof}

If a later record at height $b>h$ actually carries $p=a$, then
\textsc{Next} supplies a shorter certificate once $U_b$ is attested.
Completeness does not depend on such a record.
}

The argument uses two facts: predecessor fields bind records to the relevant
part of a key's history, and the canonical positional Merkle encoding binds
accepted openings to the committed map. Full invariants and proofs appear in
\cref{app:correctness}.

\begin{lemma}[Predecessor consequences]\label{lem:chain}
Every canonical record $(k,v,b,p)$ satisfies: (i) if
$p\notin\{-1,\uninit\}$, then
$p=\max(\Mod(k)\cap[H_0,b))$; (ii) if $p=-1$, then
$\val(k,h)=\absent$ for every $h\in[b-\eta,b)\cap[H_0,b)$; and (iii) if
$p=\uninit$, the record is the first post-$H_0$ modification of an initially
present key and makes no claim about earlier heights.
\end{lemma}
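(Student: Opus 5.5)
The plan is to prove a stronger joint invariant by induction on the height $n\ge H_0$, and then read the three record consequences off it. The invariant describes the honest validator's flat state after block $n$ executes, for each key $k$:
\begin{itemize}
\item[(a)] if the entry is present with numeric marker, then $\last(k)=\max(\Mod(k)\cap[H_0,n])$;
\item[(b)] if the entry is present with marker $\uninit$, then $k\in\operatorname{dom}(\sigma_0)$ and $\Mod(k)\cap[H_0,n]=\emptyset$;
\item[(c)] if the entry is absent, then $\val(k,h)=\absent$ for every $h\in[n-\eta,n]\cap[H_0,n]$.
\end{itemize}
The state-integrity assumption of \cref{sec:model} lets us treat the state as evolving only through finalized transitions, activation, and garbage collection. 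It also ensures every honest validator derives the same records, so "canonical record" means the record emitted under this evolution.

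\emph{Base case.} The base is the state $\sigma_0$ immediately before $H_0$. There, keys in $\operatorname{dom}(\sigma_0)$ carry $\uninit$ with no modifications, and keys outside the domain are absent. Every key outside the domain has $\val(k,h)=\absent$ for all $h$ up to the first write. The interval required by (c) is empty or clipped at $H_0$, so it holds vacuously.

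\emph{Inductive step, block $n$ writes $k$.} This is the case in which records are emitted, so the record consequences are derived here. By the emission rule of \cref{sec:metadata}, $p$ is read from the pre-block entry, that is, from the invariant after block $n-1$.
\begin{itemize}
\item A numeric $p$ equals $\max(\Mod(k)\cap[H_0,n-1])=\max(\Mod(k)\cap[H_0,n))$ by (a). This gives part (i).
\item $p=\uninit$ gives (b), so this is the first post-$H_0$ modification of an initially present key. This gives part (iii); its "no claim" clause is definitional.
\item $p=-1$ means the entry is absent, so (c) at $n-1$ gives absence on $[n-1-\eta,n-1]\cap[H_0,n)$, which contains $[n-\eta,n)\cap[H_0,n)$. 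This gives part (ii).
\end{itemize}
After the write, the entry stores $(v,n)$ and $n=\max(\Mod(k)\cap[H_0,n])$, which re-establishes (a). A deletion stores $(\tomb,n)$ as a tombstone, so it too re-establishes (a). One point needs care: with last-write-wins, the predecessor must be read before the block's first write to $k$. The record is defined on the canonical log $\Delta_n$, so each key gets a single record whose predecessor is the pre-block marker.

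\emph{Inductive step, block $n$ does not write $k$.} Here $\Mod(k)$ gains nothing at $n$.
\begin{itemize}
\item If the entry is present and not collected, (a) and (b) persist unchanged.
\item If the entry was absent after block $n-1$, then (c) at $n-1$ covers $[n-1-\eta,n-1]$. Since no modification occurs and the last modification (if any) was a deletion or nothing, $\val(k,n)=\val(k,n-1)=\absent$ by \cref{eq:val}. The window therefore extends to $[n-\eta,n]$.
\item If the entry is a tombstone collected by garbage collection after block $n$, it was written at some $m\le n-\eta$ and by (a) is the last modification. Then \cref{eq:val} gives $\val(k,h)=\absent$ for all $h\in[m,n]\supseteq[n-\eta,n]\cap[H_0,n]$, which establishes (c).
\end{itemize}

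I expect the main obstacle to be the garbage-collection and absence bookkeeping in (c). Two things must line up there. First, the retention rule ("survives until the chain reaches $m+\eta$") has to give exactly $m\le n-\eta$ at collection, so that the off-by-one matches the closed interval $[n-\eta,n]$ in (c) and the half-open interval in part (ii). Second, the clipping at $H_0$ must handle keys never present, as well as the fact that garbage collection never removes an entry written in the current block. The rest is a direct case analysis.
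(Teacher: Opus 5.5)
Your proposal is correct and follows the paper's own argument in \cref{lem:chain:full}. The paper uses the same three-clause state invariant (present with a numeric marker, present with $\uninit$, absent with trailing absence over $[n-\eta,n]\cap[H_0,n]$), proves it by induction on height, reads the record consequences off the invariant at $n-1$ in the write case, and handles garbage collection through $m\le n-\eta$. Your addition of $k\in\operatorname{dom}(\sigma_0)$ to clause (b) is a small improvement, because it makes the ``initially present key'' part of (iii) explicit rather than implicit.
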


\begin{proof}[Proof sketch]
Induct over block height. A write copies the entry's current marker before
storing the new height, while a nonwrite preserves it. A tombstone is collected
only after its retention interval, so a subsequently absent entry was absent
throughout the trailing $\eta$ heights. The full state invariant and induction
are in \cref{lem:chain:full}.
\end{proof}

\subsection{Soundness and completeness}
\label{sec:security}

\begin{lemma}[Authenticated-map binding and exclusion]
\label{lem:merkle-map}
For a canonical commitment $D=\MR_{T,I}(\mathbf z)$ to a strictly key-sorted
sequence, except with negligible probability an accepting inclusion opens the
canonical entry at its claimed position, and an accepting exclusion for $k$
implies that $\mathbf z$ contains no entry with key $k$.
\end{lemma}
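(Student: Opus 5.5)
The plan is a reduction to collision resistance of $H$: from any accepting proof that violates the statement, we extract a collision. We treat the canonical positional Merkle tree computed by honest parties from $\mathbf z$ as the reference, and compare it with the computation the verifier performs when it accepts. Because $\MR_{T,I}$ is domain-separated and binds type, index, leaf count, padding and tree shape, both computations have to arrive at the same outer digest $D$. We then walk through the two computations from the outer root toward the leaves.

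\emph{Inclusion.} This proceeds in three steps.
\begin{enumerate}
\item Suppose the verifier's supplied type, index or leaf count differs from the canonical one. Then the outermost hash inputs are distinct byte strings, because $\enc$ is injective and prefix-free, yet they hash to the same $D$. This is a collision.
\item Otherwise the two trees have the same shape. The verifier has checked the position and the left--right choices, so the authentication path follows the same root-to-leaf route as the canonical tree at position $i$.
\item Suppose the opened leaf differs from $z_i$. Walk upward from the leaf and take the first level at which the two computed node values coincide. The children there differ, but the parent hashes agree, which gives a collision. Such a level exists because the two computations agree at the root and the leaf inputs differ.
\end{enumerate}
So except with negligible probability, an accepted opening equals the canonical entry at its claimed position.

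\emph{Exclusion.} We handle each of the four forms from \cref{app:maps} by applying the inclusion argument to each opened position.
\begin{itemize}
\item \emph{Interior.} The openings are the canonical entries at consecutive positions $i$ and $i+1$, and their keys bracket $k$. Since $\mathbf z$ is strictly key-sorted, no entry with key $k$ can lie strictly between them, and neither opened key equals $k$.
\item \emph{Lower boundary.} The opening is canonical position $1$ and its key exceeds $k$, so sorting rules out $k$ anywhere in $\mathbf z$.
\item \emph{Upper boundary.} Symmetrically, the opening is the last canonical position and its key is below $k$.
\item \emph{Empty root.} If $\mathbf z$ is nonempty, acceptance equates the domain-separated empty root with a nonempty root. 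This is again a collision, because the encodings differ.
\end{itemize}
All openings share one root and one leaf count, so they are consistent with a single tree.

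The main obstacle is making the consecutive-position step rigorous. The verifier must really confirm adjacency from the paths, meaning it checks that the positions are $i$ and $i+1$ relative to the committed leaf count and tree shape, and padding leaves must never be interpretable as real entries. I would argue this from the positional binding: positions and padding are committed in the hash inputs, so a padding leaf cannot be accepted as an entry, and two non-adjacent canonical positions cannot pass the check without a collision. The collision-finding reduction then succeeds with the same probability as the adversary, which is negligible.
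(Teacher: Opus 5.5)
Your proposal is correct and follows essentially the same route as the paper's proof: a collision reduction that compares the accepting computation with the canonical tree (a mismatch in the outer type, index, or count; otherwise the first level where distinct inputs hash to the same value), followed by a case analysis over the four exclusion forms using strict sorting and the domain-separated empty root. The adjacency concern you flag is already handled by your inclusion argument, since each opening is canonical at its claimed position and the verifier checks that those positions are $i$ and $i+1$ within the committed leaf count, which is exactly how the paper concludes.
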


\begin{proof}[Proof sketch]
The committed type, index, count, positions, and domain separators bind each
accepted path to its canonical leaf. Adjacent interior openings, or the proper
boundary opening, then imply exclusion by strict ordering. The full reduction
accompanies \cref{lem:merkle-map:full}.
\end{proof}

\begin{theorem}[Soundness]\label{thm:soundness}
The read system of \cref{sec:cert} is sound.
\end{theorem}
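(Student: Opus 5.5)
The plan is to condition on a ``good'' event and then argue by cases on the certificate's anchor and extension. Let $G$ be the event that the adversary's run contains no hash collision in $H$ and no forged signature on any $m_D$ by an honest committee member of an authenticated epoch. By collision resistance and multi-user EUF-CMA (plus rogue-key security when aggregation is used), $\Pr[\neg G]$ is negligible. A union bound over the polynomially many digests and epochs keeps this negligible. Everything after this is deterministic under $G$.

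\textbf{Step 1: every named digest is canonical.} Take any digest $D$ named in an accepting certificate. $\mathsf{AuthCom}$ pins $\Pi_{\rho(D)}$ to the canonical committee, and the attestation carries $\tau>f$ distinct valid signers. So some signer is honest, and under $G$ that signer really signed $m_D$. An honest validator signs only after finality, at most once per type and index, and only the digest it derives. By deterministic execution and honest-validator state integrity, that derived digest is the canonical $U_n$ or $S_W$. The message binds type and index, so $D$ equals the canonical digest for its claimed type and index. \Cref{lem:merkle-map} then makes every accepted inclusion open the canonical entry. It also makes every accepted exclusion imply canonical absence of $k$ from $\Delta_n$ or from $M_W$.

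\textbf{Step 2: ordinary anchors.} An accepted anchor $(k,v,a,p)$ in $U_a$ gives $(k,v)\in\Delta_a$, hence $a\in\Mod(k)$. I then show every accepted extension yields $\Mod(k)\cap(a,h]=\emptyset$:
\begin{itemize}
\item \textsc{Empty}: here $h=a$, so there is nothing to show.
\item \textsc{Next}: the canonical record $(k,v',b,a)$ with $b>h$ gives $a=\max(\Mod(k)\cap[H_0,b))$ by \cref{lem:chain}(i).
\item \textsc{Cover}: an exclusion in $S_W$ gives $\Mod(k)\cap W=\emptyset$ by the definition of $M_W$. An inclusion of $k\mapsto a$ gives $\max(\Mod(k)\cap W)=a$, which clears $W\cap(a,\infty)$. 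The verifier's integer-coverage check then clears all of $(a,h]$.
\end{itemize}
So $a=\max(\Mod(k)\cap[H_0,h])$. \Cref{eq:val} gives $\val(k,h)=\out(v)$, and the output check forces $y=\out(v)$.

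\textbf{Step 3: sentinel anchors.} A canonical $(k,v,b,-1)$ with $b-\eta\le h<b$ and $h\ge H_0$ gives $\val(k,h)=\absent$ by \cref{lem:chain}(ii), and the verifier forces $y=\absent$. $\uninit$ records are never accepted as sentinels or numeric pointers, so they cannot enter either case. These cases exhaust all acceptance rules, so $y=\val(k,h)$ on $G$.

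I expect Step 1 to be the main obstacle. It is where the cryptographic reduction meets the systems assumptions. Three things need care there:
\begin{itemize}
\item the signed message must bind the epoch, type and index, so that no signature transfers between digests;
\item committee authentication must fix the signer set;
\item ``honest derivation'' must really mean equality with the canonical object, which relies on the state-integrity assumption covering the predecessor metadata.
\end{itemize}
I would handle it by folding the reduction into the single event $G$ and treating canonicity as a direct consequence.
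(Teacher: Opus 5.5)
Your proposal is correct and follows the paper's proof essentially step for step. You condition on no forgery or collision, use $\tau>f$ to obtain an honest signer and hence a canonical digest, apply the map-binding lemma, and then handle \textsc{Empty}, \textsc{Next} (via the predecessor invariant), \textsc{Cover} (window exclusions plus the $k\mapsto a$ inclusion, with BUDs serving as singleton windows) and the sentinel case, with $\uninit$ excluded; the only difference is presentational, in that you collect the cryptographic failures into one event $G$ and spell out the message's binding of epoch, type, and index, which the paper leaves implicit in its attestation rule.
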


\begin{proof}[Proof sketch]
Condition on no signature forgery or hash collision. Since $\tau>f$, every
accepted digest has an honest signer and is canonical; \cref{lem:merkle-map}
then gives each opening its claimed meaning. An ordinary anchor fixes the write
at $a$. \textsc{Empty} has $a=h$; \textsc{Next} uses
\cref{lem:chain}(i); and \textsc{Cover} clears every height in $(a,h]$.
Thus \cref{eq:val} fixes $y=\out(v)$. A sentinel is valid only over the interval
of \cref{lem:chain}(ii), and $\uninit$ is never interpreted as a numeric predecessor or sentinel.
These cases exhaust verification; the proof of \cref{thm:soundness:full}
gives the full case analysis.
\end{proof}

Completeness applies to the domain $Q$ defined in \cref{sec:model}, assuming
that the required archive objects, attestations, and committee evidence are
retrievable and every used epoch can be authenticated. A finite delay also
requires $\mathsf A_{\mathrm{att}}(\lambda)$, continued archive availability,
and committee authentication by height $h+\lambda$. Outside $Q$, an
activation record makes no backward claim and a sentinel reaches only
$\eta$ blocks (\cref{app:outsideq}).

\begin{theorem}[Completeness]\label{thm:completeness}
The read system of \cref{sec:cert} is complete. Under the finite-delay
assumptions above, a certificate for every $(k,h)\in Q$ is available by
height $h+\lambda$.
\end{theorem}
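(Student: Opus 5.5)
The plan is to build the staircase certificate (S5) explicitly for an arbitrary $(k,h)\in Q$ and then check it against the acceptance rules of \cref{sec:cert}. Fix $(k,h)\in Q$ and set $a=\max(\Mod(k)\cap[H_0,h])$; this maximum exists by the definition of $Q$. Because $a\in\Mod(k)$, the canonical write log $\Delta_a$ contains a pair $(k,v)$. The canonical BUD $U_a$ therefore contains a record $(k,v,a,p)$, so the honest prover can open it with an inclusion path. By \cref{eq:val}, $\val(k,h)=\out(v)$, and the prover outputs $y=\out(v)$. If $a=h$, the anchor together with \textsc{Empty} already satisfies every check, which settles that case.

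For $a<h$, write $d=h-a\ge1$ and let $\ell'=\min\{\lfloor\log_e d\rfloor,L\}$. Let $W^\star$ be the level-$\ell'$ aligned window containing $a$. Its right endpoint is at most $a+e^{\ell'}-1\le a+d-1<h$. When $\ell'>0$, the span map $M_{W^\star}$ contains $k\mapsto a$: since $\Mod(k)\cap(a,h]=\emptyset$ and $W^\star$ ends below $h$, the height $a$ is the last modification of $k$ inside $W^\star$. Opening that entry clears $W^\star\cap(a,\infty)$. When $\ell'=0$, the singleton window is $\{a\}$ and needs no separate opening.

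Next I would tile $(\max W^\star,h]$ greedily. Starting at $s=\max W^\star+1$, repeatedly take the largest level $\ell\le L$ whose aligned window begins at $s$ and ends at or before $h$. Such a window always exists, because level $0$ always qualifies, and each step strictly advances $s$, so the tiling terminates exactly at $h$. No tile contains a modification of $k$, since every tile lies inside $(a,h]$. By \cref{lem:merkle-map}, the honest prover can therefore produce a canonical exclusion proof for each canonical span map or BUD. Together the cleared sets cover every integer in $(a,h]$, which is the condition $\Verify$ checks. The count bound $B(d)$ is not needed for completeness, so I would cite \cref{eq:staircase} only as a remark.

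The remaining step is attestation and timing, and it is the delicate part because it is where the finite-delay claim enters. Every digest used has closure height $c(D)\le h$. The honest signers, at least $\tau\le n_\rho-f$ of them, derive the same canonical digest from the finalized chain and sign $m_D$. Under the stated retrievability and committee-authentication assumptions, the verifier can therefore check every attestation, and under $\mathsf A_{\mathrm{att}}(\lambda)$ all of them are retrievable by height $c(D)+\lambda\le h+\lambda$. The main obstacle is arguing carefully that the greedy tiling stays within windows the schedule actually materializes. Each tile must be aligned and have level at most $L$, and it must start at a multiple of $e^\ell$ offset from $H_0$. Alignment holds because the greedy step at position $s$ considers only levels $\ell$ for which $s-H_0$ is divisible by $e^\ell$, so the chosen tile starts at an aligned point. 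Termination follows because level $0$ is always available.
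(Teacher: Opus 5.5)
Your proposal is correct and follows the paper's proof essentially step for step. It takes the anchor at $a=\max(\Mod(k)\cap[H_0,h])$, opens $k\mapsto a$ in the level-$\ell'$ window containing $a$ (which closes strictly below $h$ because $e^{\ell'}\le d$), greedily tiles the rest through $h$ with aligned exclusion windows, and uses closure at or before $h$ together with $\mathsf A_{\mathrm{att}}(\lambda)$ for the delay bound.

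One small fix: each exclusion path exists directly from the canonical map's four exclusion forms, since the key is absent from a strictly sorted sequence. It does not come from \cref{lem:merkle-map}, which states the converse, binding direction used for soundness.
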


\begin{proof}[Proof sketch]
Let $a=\max(\Mod(k)\cap[H_0,h])$. The BUD at $a$ anchors the value; if $a=h$,
\textsc{Empty} finishes. Otherwise choose
$\ell'=\min\{\lfloor\log_e(h-a)\rfloor,L\}$. Its aligned window containing
$a$ ends below $h$ and clears the initial suffix, and the greedy base-$e$
staircase partitions the remainder into closed write-free windows. Every used
window closes by $h$, so the theorem's attestation, archive, and committee
hypotheses supply all evidence by $h+\lambda$. A later direct successor can
shorten the result but is unnecessary. See the proof of
\cref{thm:completeness:full}.
\end{proof}

\subsection{Costs}
\label{sec:costs}

Authenticated-commitment work follows block write volume rather than live-state
size. For block $n$, constructing the BUD requires sorting $w_n$ records and
then $O(w_n)$ hashes; the aligned hierarchy adds $O(L)$ amortized map-entry
work per write, and tombstone collection is $O(1)$ amortized. Validators retain
flat state but no authenticated structure over $\mathcal K$ or state-tree
interior nodes. Ordinary state access can nevertheless retain cache and
storage dependence on $N$, as \cref{sec:eval:commitcost} shows.

An anchor path costs $O(\log w_a)$ hashes, \textsc{Next} adds
$O(\log w_b)$, and \textsc{Cover} adds $\sum_iO(\log s_i)$ over span maps of
sizes $s_i$. Implementation details and complete wire accounting appear in
\cref{app:limits}.

\subsection{Trust anchors, acceptance horizon, and availability}
\label{sec:limits}

The construction starts from canonical state $\sigma_0$ at $H_0$ and a
trusted checkpoint $C$ for committee authentication. BUDs do not commit to
$\sigma_0$ itself: auditing the complete state needs a separate trusted
commitment, while reads in $Q$ rely on honest signers deriving digests from
canonical execution. Every accepted epoch must satisfy the historical-key
corruption bound of \cref{sec:model}; key erasure, signatures with forward
security, or later checkpoints are needed if that bound cannot be maintained.

\paragraph{Activation and cutover.}
An existing chain can activate legacy keys by scanning the state in
deterministic chunks and emitting value-preserving writes. The scan skips keys
already activated by ordinary traffic; it must retain activation status or
postpone tombstone collection until each key has been considered. The first
record carries $\uninit$ and supports reads only from its own height onward.
After the scan and at most another $e^L$ blocks of hierarchy construction,
BUDs can replace trie proofs for anchored queries once the evidence is
available; earlier heights can still use archived roots. Lazy activation on the first ordinary write or touch avoids
the scan but leaves untouched legacy keys outside $Q$ indefinitely. It has no
finite cutover guarantee. \Cref{app:deployment} gives migration and state-sync
details.

Certificate production requires the records, maps, attestations, and committee
evidence to remain retrievable. Missing evidence can prevent a read, but cannot
make a false certificate verify. The finite-delay assumptions are stated in
\cref{sec:security}. Measured byte totals cover records, entries, identifiers,
and Merkle paths; full wire cost also includes each digest's attestation and
uncached evidence for each committee epoch (\cref{app:limits}).

\section{Evaluation}
\label{sec:eval}

The evaluation follows the design's division of work: validators apply and
commit updates, while provers assemble historical certificates. We measure
state-size sensitivity and QMDB's reclamation tradeoff
(\cref{sec:eval:commitcost}), certificate cost and strategy availability
(\cref{sec:eval:q3}), and the price of extra SuperBUDs
(\cref{sec:eval:q5}).

\subsection{Setup}

All systems replay the same blocks. \textsc{bud} includes flat-state
application and BUD construction; baselines are a simplified in-memory Merkle
Patricia trie, a disk-backed version, NOMT~\cite{nomt}, and QMDB~\cite{qmdb}.
The prototype trie omits RLP and Ethereum's account/storage split. Synthetic
traces vary $N$ and $w$ and include deletions. We also replay an Ethereum
account-access trace covering $24{,}576$ blocks, with $445$ marked accounts
per block on average. Accounts identified from public RPC data are treated as
writes and densely remapped. Marking can include unchanged accounts and miss
internal changes; storage writes, deletions, and actual values are unavailable.
These results describe recurrence and workload sensitivity under the benchmark
encoding, not native Ethereum-client performance (\cref{app:eval:trace}).

Commitment points comprise $128$ measured blocks on an Apple M4 with
$16$\,GB RAM. Only the disk-backed trie is cache-capped; NOMT and QMDB use
their defaults, so absolute comparisons are configuration-specific.
Certificate experiments run for $6\eta$ blocks and build every applicable
strategy for each sampled anchor and staleness. Event sampling favors recurring
keys; uniform-key sampling chooses an observed key uniformly, then one of its
events, giving cold keys equal weight. Verification times cover warm hash
paths. \Cref{app:eval:q3:setup} gives the full settings and sampling methods. The prototype benchmarks replay
and hash proofs; activation, digest domain separation, and end-to-end
attestation verification are not implemented.

\subsection{Commitment cost}
\label{sec:eval:commitcost}

\begin{figure}[t]
\centering
\sbox\frlbox{\includegraphics[width=0.98\evcolL]{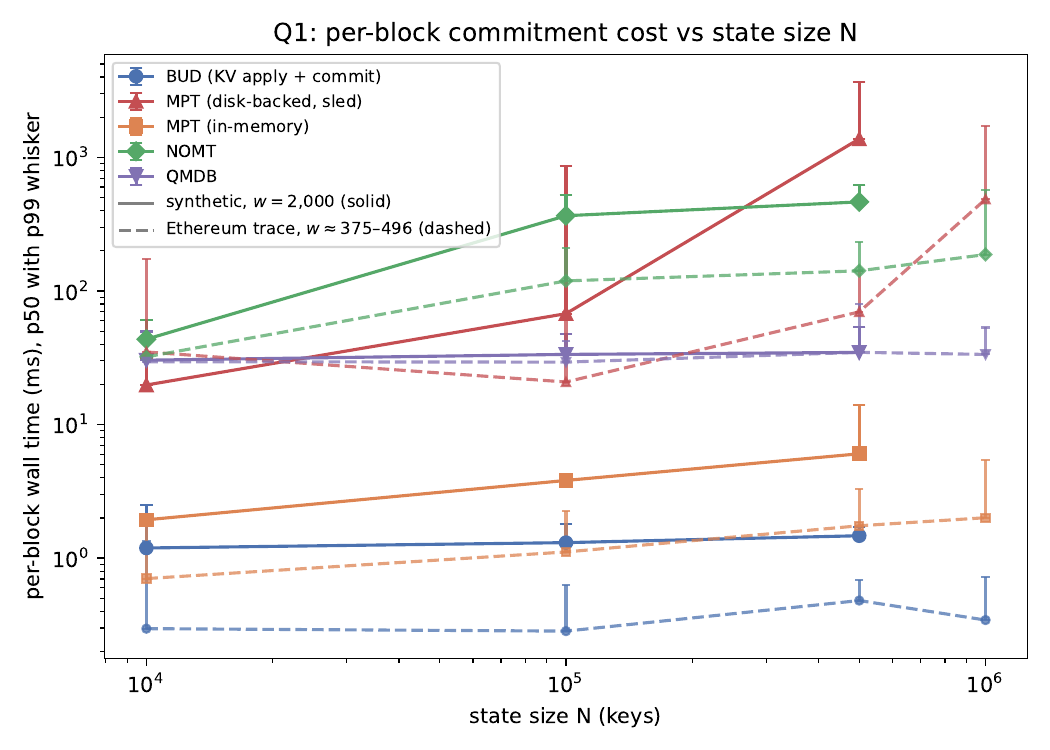}}
\sbox\frrbox{\scriptsize
\setlength{\tabcolsep}{1.0pt}
\renewcommand{\arraystretch}{1.35}
\begin{tabular}{@{}lccc@{\hspace{3pt}}c@{}}
\toprule
& \multicolumn{3}{c}{Synthetic} & Ethereum \\
\cmidrule(lr){2-4}\cmidrule(l){5-5}
Backend & $w{=}10^2$ & $w{=}2{\cdot}10^3$ & $w{=}10^5$ & $w{\approx}375$ \\
\midrule
\textsc{bud} & 0.77   & 0.67   & 0.89 & 0.75   \\
\textsc{mpt} & 7.26   & 2.23   & 1.23 & 2.74   \\
NOMT         & 356.44 & 123.37 & 5.46 & 295.49 \\
QMDB         & 322.24 & 17.17  & 0.80 & 74.82  \\
\bottomrule
\end{tabular}}
\frmeasure
\begin{minipage}[t]{0.53\linewidth}
\vspace{0pt}
\centering
\frslot{\usebox\frlbox}
{\small (a) State-size sweep.}
\end{minipage}\hfill
\begin{minipage}[t]{0.45\linewidth}
\vspace{0pt}
\centering
\frslot{\usebox\frrbox}
{\small (b) Mean replay time per key.}
\end{minipage}
\caption{Commitment-cost results. (a) Per-block p50 versus state size (p99
whiskers): solid is synthetic ($w=2{,}000$), dashed is $128$-block Ethereum
trace windows ($w\approx375$--$496$). Only the disk \textsc{mpt} is
cache-capped. (b) Mean end-to-end time per key at $N=10^5$, in
$\mu\mathrm{s}$; \textsc{bud} includes state application and BUD construction,
but excludes SuperBUD maintenance. Trace scope and measurement boundaries are
in \cref{app:eval:q3:setup}.}
\label{fig:eval:nsweep}
\label{fig:eval:wsweep}
\end{figure}

Growing the synthetic state from $10^4$ to $5{\cdot}10^5$ keys at
$w=2{,}000$ raises \textsc{bud} time from $1.19$ to $1.47$\,ms
($1.24\times$), versus $3.1\times$ for the in-memory trie and $69.5\times$
for the cache-capped disk trie; NOMT rises from $43.6$ to $464.8$\,ms, while
QMDB is nearly flat (\cref{fig:eval:nsweep}(a)). The residual \textsc{bud}
increase is flat-state lookup, not commitment construction.

On the Ethereum trace, \textsc{bud} changes by $1.16\times$ from $10^4$ to
$10^6$ keys, versus $13.9\times$ for the disk trie. At $N=10^5$,
\textsc{bud} uses $0.67$--$0.89\,\mu\mathrm{s}$ per synthetic write and
$0.75\,\mu\mathrm{s}$ per marked account (\cref{fig:eval:wsweep}(b)).

\paragraph{QMDB: update cost and reclamation.}
QMDB also avoids a steep state-size penalty and reaches
$0.80\,\mu\mathrm{s}$ per write at $w=10^5$, against BUD's $0.89$.
Its append-only authenticated storage has a different maintenance obligation:
reclaiming superseded regions relocates live entries in updater threads on
which block completion waits. BUD's tombstone expiry requires no such
relocation to maintain authentication. We therefore test reclamation separately
in a $20{,}000$-block replay at $N=10^7$, $w=2{,}000$, and $2\%$ deletions.
We lower only the live-entry gate from $20$ million to $500{,}000$ per shard,
repeat each configuration twice, and confirm activation through advancing
reclamation watermarks. The latency difference is smaller than between-run
variation, but active reclamation grows database files by $7.2$\,GiB versus
$2.7$\,GiB when inactive ($2.6\times$). These are retained-file deltas;
physical device writes were not instrumented. This is the measured
tradeoff; behavior near the default gate of roughly $3.2{\cdot}10^8$ live
entries remains unmeasured. \Cref{sec:eval:q7} gives the activation settings,
watermark checks, and storage measurements.

\subsection{Certificate cost and strategy choice}
\label{sec:eval:q3}

\begin{figure}[t]
\centering
\begin{minipage}[t]{0.49\linewidth}
  \centering
  \includegraphics[width=0.94\linewidth]{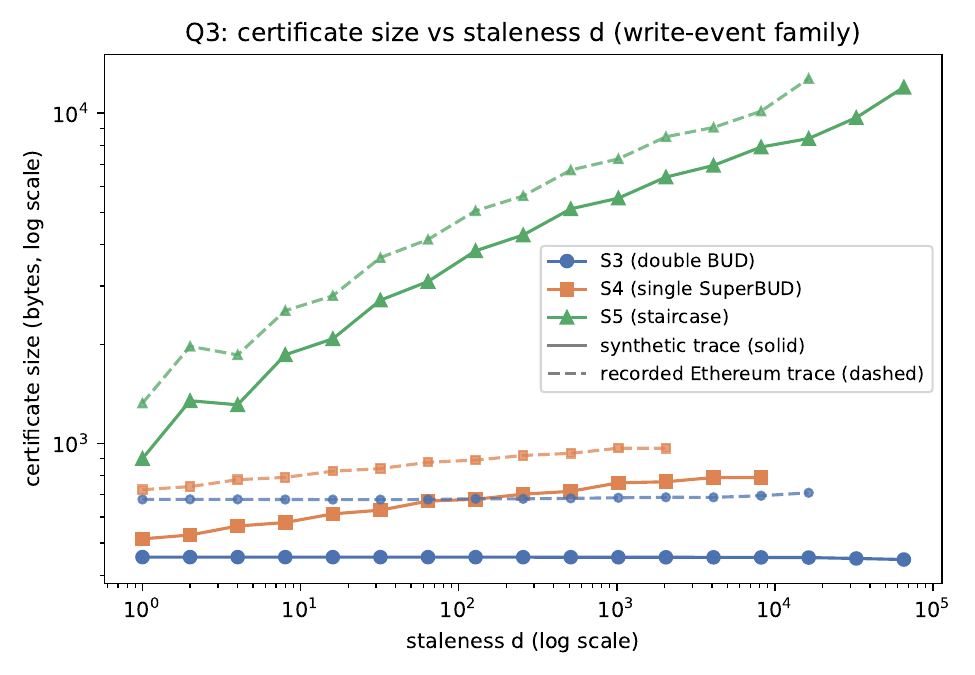}
  {\small (a) Certificate size.}
\end{minipage}\hfill
\begin{minipage}[t]{0.49\linewidth}
  \centering
  \includegraphics[width=0.94\linewidth]{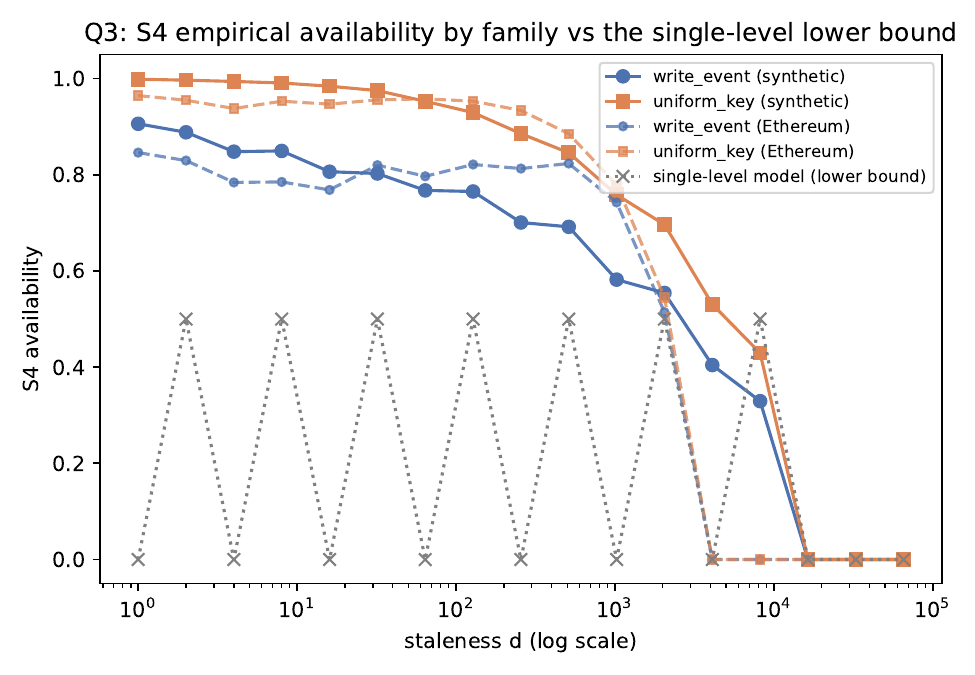}
  {\small (b) S4 availability.}
\end{minipage}
\caption{Read-layer certificate size and S4 availability versus staleness.
Solid: synthetic ($\eta=16{,}384$, $w=50$); dashed: Ethereum account-access
trace ($\eta=4{,}096$, full-trace mean $w\approx445$). Bytes use the benchmark
encoding; totals exclude attestation and committee evidence
(\cref{app:eval:q3:costs}).}
\label{fig:eval:q3cost}
\end{figure}

Read-layer size follows the staircase analysis: on the synthetic trace, S3
remains near $453$ bytes, S4 below $800$ bytes, and S5 grows logarithmically to
about $12$\,kB; every measured S5 exclusion count satisfies
\cref{eq:staircase}. Warm hash-path verification reaches at most
$146\,\mu\mathrm{s}$ at p99 (\cref{fig:eval:q3cost}(a)). Attestation checks
are additional: a separate BLS12-381 benchmark, using preaggregated keys for
$67$ of $100$ authenticated signers, takes about $0.65$\,ms to batch-verify the two
aggregate signatures for S3 or S4 (\cref{app:eval:q3:setup}). Under the same
benchmark encoding, our prototype trie produces a $2{,}087$-byte inclusion
proof for the easier current-value query.

Strategy availability is recurrence-sensitive. At $d=1$, S3 is cheapest for
$84.2\%$ of uniform-key synthetic queries but $28.4\%$ on the Ethereum trace:
half of its observed accounts appear only once and have no recorded successor.
S4 fills much of this gap but decays with staleness
(\cref{fig:eval:q3cost}(b)). By $d=64$, the mean S5 cover is about $5.5$
windows in both workloads. History therefore changes strategy availability
more than cover size at fixed staleness. Component costs, complete shares,
and skew sensitivity appear in
\cref{app:eval:q3:costs,app:eval:q3:strategies,app:eval:skew}.

\subsection{Schedules and hierarchy parameters}
\label{sec:eval:q5}

Measured digest rates equal $(1-e^{-L})/(e-1)$ at every reported precision,
and every sampled S5 exclusion count satisfies \cref{eq:staircase}. At fixed
$\eta=4{,}096$, moving from $(e,L)=(2,12)$ to $(8,4)$ lowers extra digests per
block from $0.9998$ to $0.1428$ and roughly halves p99 block time, while the
mean cover grows from $4.5$ to $8.5$ windows and S4 availability falls by about
ten percentage points (\cref{app:eval:params}). Full striding starts a size-$e^\ell$ window every $e^{\ell-1}$ blocks,
rather than every $e^\ell$ blocks. At $e=4$ it produces
$4.0\times$ as many digests as aligned, raising synthetic S4 availability from
$83.7\%$ to $96.0\%$; the Ethereum trace shows the same tradeoff
(\cref{app:eval:schedules}). Our single-level baseline uses disjoint $64$-block SuperBUDs, producing
$21\times$ fewer digests than aligned but leaving more gaps to S5. For gaps whose required cover fits within the level
cap, at least two evenly staggered phases guarantee a bounded-wait
two-digest proof
(\cref{app:schedules}).

\section{Conclusion}

Historical reads can be authenticated without making every validator maintain
an authenticated copy of the entire state. BUDs prove writes; predecessor
links and SuperBUDs prove what stayed unchanged. Together they support
historical membership and exclusion with construction work tied to write
volume and an explicit trade-off between proof size and waiting time.
Our prototype confirms the value of this separation: over a $50\times$
increase in state size, the base-BUD path grows by $1.24\times$, versus
$3.1\times$ and $69.5\times$ for the two trie baselines. Validators execute
against flat state and attest changes; untrusted archives assemble the proofs.
The global authenticated structure is no longer a prerequisite for
authenticated reads.

\bibliographystyle{splncs04}
\bibliography{refs}

\clearpage
\appendix
\noindent\textbf{Appendix guide.}
The main text gives the protocol, its guarantees, and the experimental
findings. The appendices make those claims checkable: exact encodings and
verifier checks in \cref{app:protocol-details}, full proofs in
\cref{app:correctness}, schedule derivations in \cref{app:schedules}, and
workloads and measurements in \cref{app:eval:q3}. Retention, trust, and
migration details appear in \cref{app:eta,app:outsideq,app:limits,app:deployment}.

\begin{table}[!ht]
\caption{Notation used throughout the construction and evaluation.}
\label{tab:terms}
\centering
\footnotesize
\renewcommand{\arraystretch}{1.12}
\begin{tabular}{@{}lp{0.78\textwidth}@{}}
\toprule
Term or symbol & Meaning \\
\midrule
BUD & Merkle root of one block's canonical, key-sorted write log. \\
SuperBUD & Merkle root of the span map $M_W$ for an interval $W$. \\
$\Delta_n$ & Canonical write log produced by block $n$. \\
$N$ & Number of live state entries. \\
$w_n$ & Number of distinct keys written in block $n$. \\
$H_0$ & Deployment origin, below which the write logs make no claim. \\
$\Mod(k),\,\val(k,h)$ & Modification heights of key $k$ and its canonical value at height $h$. \\
$h,\,a,\,d=h-a$ & Query height, latest modification at or below $h$, and staleness $d$. \\
$p$ & Previous-modification field: a height, $-1$, or $\uninit$. \\
$\uninit$ & Marker for an initially present key with no post-$H_0$ modification. \\
$\eta$ & Tombstone-retention window, maximum reach of a sentinel anchor. \\
$e,\,L,\,\ell$ & Hierarchy base, highest configured level, and a level index. \\
$W,\,M_W$ & A block interval and its map from each modified key to its last modification in $W$. \\
$1\le\varphi\le e$ & Number of staggered schedules committed at each hierarchy level. \\
$f,\,\tau,\,\lambda$ & Byzantine bound, attestation threshold, and attestation lag. \\
\bottomrule
\end{tabular}
\end{table}

\section{Additional related work}
\label{app:related}

Solana mixes a lattice hash of all accounts into each bank hash, while Sui
commits an elliptic-curve multiset hash of live objects
\cite{simd0215,simd0223,suicheckpoints}. Both follow incremental multiset
hashing~\cite{bellare1997,lewi2019}. Ethereum's deferred-root and
post-execution access-list proposals reduce synchronous work, but the global
root remains; proving one access-list entry also requires the list
\cite{eip7862,eip7928}. Aardvark is the closest stateless design: validators
retain a commitment, untrusted servers prove, and a version window admits stale
proofs~\cite{aardvark}. Unlike BUDs, these commitments span the dictionary.

Key-transparency systems authenticate epoch values and append-only directory
evolution~\cite{coniks,seemless,parakeet,merkle2}; tamper-evident logs and
persistent dictionaries support membership at historical snapshots
\cite{crosby2009,pads2001,balloon}. Plasma Cash addresses interval exclusion
with RSA accumulators~\cite{plasmacash,rsaexclusion}. Verifiable ledger
databases instead assume a trusted operator~\cite{ledgerdb,glassdb}, while
Ethereum state expiry parallels our retention window and touches
\cite{stateexpiry}. SuperBUDs resemble history trees and Merkle Mountain Ranges
\cite{crosby2009,mmr}; their hierarchy authenticates bucketing analogous to
exponential histograms~\cite{dgim}, and staggered windows resemble sparse
tables for idempotent aggregates~\cite{bfc2000}.

\section{Encodings and protocol derivations}
\label{app:protocol-details}

\Cref{sec:construction} gives the construction and all acceptance rules.
This appendix fixes the exact Merkle encoding, collects the verifier checks,
and derives the merge and threshold formulas. It supplies implementation
detail without adding a new proof mechanism or trust assumption.

\subsection{Canonical authenticated maps}
\label{app:maps}

Let $T\in\{\mathsf{BUD},\mathsf{SuperBUD}\}$, let $I$ be an object index,
and let $\mathbf z=(z_1,\ldots,z_m)$ have strictly increasing keys. The strings
$\mathtt{leaf}$, $\mathtt{pad}$, $\mathtt{node}$, $\mathtt{root}$, and
$\mathtt{empty}$ are distinct domain separators. For $m>0$, put
$M=2^{\lceil\log_2m\rceil}$ and form a complete tree with leaves
\[
x_i=
\begin{cases}
H\bigl(\mathtt{leaf}\parallel\enc(m,i,z_i)\bigr), &1\le i\le m,\\
H\bigl(\mathtt{pad}\parallel\enc(m,i)\bigr), &m<i\le M.
\end{cases}
\]
An internal node with children $x,y$ is
$H(\mathtt{node}\parallel x\parallel y)$. If $R$ is the tree root, define
$\MR_{T,I}(\mathbf z)=H(\mathtt{root}\parallel\enc(T,I,m,R))$. For $m=0$,
define $\MR_{T,I}(())=H(\mathtt{empty}\parallel\enc(T,I,0))$.

An inclusion proof supplies $m$, a position $i\in\{1,\ldots,m\}$, its entry,
and one sibling per level; the bits of $i-1$ determine left--right choices. An
exclusion proof for $k$ has one of four forms:
\begin{enumerate}[label=(\roman*),leftmargin=2em,itemsep=0pt]
\item $m=0$ and the root is the distinguished empty root;
\item position $1$ opens to a key greater than $k$;
\item consecutive positions $i,i+1$ open to keys $k'<k<k''$;
\item position $m$ opens to a key less than $k$.
\end{enumerate}
All openings in one proof use the same type, index, root, and leaf count.

\subsection{Touch and certificate checks}
\label{app:cert-checks}

For a touch record $(k,v,t,p)$ and target $h$: (i) if $h=t$, the record is an
ordinary anchor; (ii) if $h<t$, $p\notin\{-1,\uninit\}$, and $p\le h$, it is
a \textsc{Next} extension for the anchor at $p$; (iii) if $p=-1$, it is a
sentinel anchor only when $t-\eta\le h<t$; and (iv) if $p=\uninit$, it
activates the key but says nothing below $t$.

For every digest $D$, $\Verify$ requires
$\mathsf{AuthCom}(C,\rho(D),\Pi_{\rho(D)},\chi_{\rho(D)})=1$ and checks the
signer set, threshold, signed message, type, index, and root. It verifies the
leaf count, position, path length, and left--right choices of every map proof;
an interior exclusion must open positions $i,i+1$ under one root and count.

All components name one key. An ordinary anchor opens $(k,v,a,p)$ in $U_a$
with $a\le h$. \textsc{Empty} requires $h=a$; \textsc{Next} opens
$(k,v',b,a)$ in $U_b$ with $b>h$; and \textsc{Cover} admits only an inclusion
$k\mapsto a$ in a window containing $a$ or an exclusion against its named
interval, with the cleared integer intervals covering all of $(a,h]$.
$\Verify$ forces $y=\out(v)$ for an ordinary anchor. A sentinel opens
$(k,v,b,-1)$ in $U_b$, binds the embedded height to the BUD index, requires
$b-\eta\le h<b$, and forces $y=\absent$. It rejects $h<H_0$, an
unauthenticated epoch, and every attempt to interpret $\uninit$ as a sentinel
or numeric predecessor.

\subsection{SuperBUD construction and aligned hierarchy}
\label{app:superbud-details}

For partial maps, $A\vee B$ retains the larger height for a key in both and
the sole entry for a key in only one. If intervals $W_1,\ldots,W_r$ have union
$W$, then $M_W=\bigvee_iM_{W_i}$ because the last modification in a union is
the maximum of the last modifications in its parts; overlaps are harmless.
A validator can therefore assemble a span in time linear in the total size of
materialized child maps. Roots alone do not suffice: it must merge the entries
and merklize again. The merge also discards older entries for keys modified
twice, so a span cannot generally be narrowed. Dropping the oldest height is
the special case used by the sliding schedule of \cref{app:schedules}.

The exact aligned intervals are
\[
\window{\ell}{j}=[H_0+je^\ell,\,H_0+(j+1)e^\ell-1],
\qquad \ell\in\{0,\ldots,L\},\quad j\ge0.
\]
A level-$(\ell+1)$ window is the disjoint union of its $e$ level-$\ell$
children, so
$M_{\window{\ell+1}{j}}=M_{\window{\ell}{ej}}\vee\cdots\vee
M_{\window{\ell}{ej+e-1}}$. Level $\ell$ closes every $e^\ell$ blocks;
beyond the BUD this gives $(1-e^{-L})/(e-1)$ digests per block and at most
$L+1$ times write-log growth, because each write is final for its key in at
most one window per level.

For $d=h-a$, take the level
$\ell'=\min\{\lfloor\log_e d\rfloor,L\}$ window containing $a$. Since
$e^{\ell'}\le d$, it closes strictly below $h$. Opening $k\mapsto a$ clears
its suffix. Starting after that window, at level $j$ take
$\lfloor R_j/e^j\rfloor$ aligned windows and carry the remainder downward.
This greedy base-$e$ decomposition clears the rest through $h$ and gives the
exclusion bound of \cref{eq:staircase}. Including $U_a$ and a distinct
anchoring SuperBUD, the certificate names at most $B(d)+2$ digests.

The level-$\ell$ window containing $a$ also contains $h$ exactly when
$(a-H_0)\bmod e^\ell<e^\ell-d$. A prover may use the first such closed
window whose map still contains $k\mapsto a$; a later write within that
window would invalidate S4. Wider levels close later. Alignment can miss at every level:
if $a-H_0\equiv-1\pmod{e^L}$, the anchor lies immediately below every aligned
boundary. The staircase remains available without an additional wait.

\subsection{Attestation details}
\label{app:attest-details}

An aggregate signature binds the exact signer set; committee public keys must
be validated at registration and the aggregate scheme must resist rogue-key
attacks, for example through proofs of possession. Verifying individual
signatures is also sufficient. The threshold requirements have separate roles:
\begin{enumerate}[leftmargin=2em,itemsep=1pt]
\item $\tau>f$ puts an honest signer in every accepted attestation;
\item $\tau\le n_\rho-f$ permits production without Byzantine participation;
\item $2\tau>n_\rho+f$ makes two threshold sets intersect honestly and gives
      quorum-intersection uniqueness.
\end{enumerate}
The third is stronger than soundness requires. All three are feasible exactly
when $n_\rho\ge3f+1$, with
$\lceil(n_\rho+f+1)/2\rceil\le\tau\le n_\rho-f$. Soundness plus honest-only
availability needs only $n_\rho\ge2f+1$ and $f<\tau\le n_\rho-f$.

A deployment maintaining a separate replica-agreement digest over $\Delta_n$,
such as a homomorphic multiset hash~\cite{lewi2019,bellare1997}, may bind it in
the same signed message. This optimization does not change the thresholds.

\section{Correctness proofs}
\label{app:correctness}

\fullcorrectness

\section{Unbounded retention}
\label{app:eta}

A finite $\eta$ is needed only to collect tombstones and to bound how far a
sentinel record with $p=-1$ may reach. It does not limit certificates for
queries in $Q$, which begin from an actual modification record. Setting
$\eta=\infty$ retains every tombstone, at the cost of storage proportional to
cumulative deletions. A record with $p=-1$ then proves that an initially absent
key remained absent back to $H_0$. An unactivated member of $\sigma_0$ is
different: its first later write or touch emits $\uninit$ and makes no backward
claim. A key with no record still needs a later write or touch before any
record can mention it.

With no finite retention window, $\eta$ no longer determines the top hierarchy
level. The deployment chooses $L$ independently, and queries with
$d\ge e^{L+1}$ use the capped branch of \cref{eq:staircase}.

\section{Queries outside the completeness domain}
\label{app:outsideq}

\paragraph{Queries outside $Q$.}
Suppose $\Mod(k)\cap[H_0,h]=\emptyset$. If
$k\in\operatorname{dom}(\sigma_0)$, a later activation record carries
$p=\uninit$ and deliberately makes no claim about the value at $h$. If
$k\notin\operatorname{dom}(\sigma_0)$, a later record $(k,v,b,-1)$ proves
absence at $h$ exactly when $h<b\le h+\eta$, by \cref{lem:chain}(ii). Once
$h+\eta$ has passed without such a record, no future sentinel under finite
retention can certify that old query. These limitations do not affect the
completeness claim on $Q$.

\section{Operational assumptions and complete costs}
\label{app:limits}

\paragraph{Trust anchors and cumulative binding.}
The canonical state $\sigma_0$ and verifier checkpoint $C$ are independent
anchors. The former determines validator execution after $H_0$; the latter
authenticates committees and need not contain application state. A BUD is not
a standalone state root. Instead, $\sigma_0$ plus the ordered committed logs
determines one evolving state. A verifier wishing to reconstruct or audit that
entire state needs a separate trusted commitment to $\sigma_0$ and access to
the logs. A point read in $Q$ uses local certificates and the fact that an
honest signer derived every accepted digest from canonical execution.

\paragraph{Epochs and historical keys.}
Every digest is checked against the committee assigned to its closure epoch,
including when one certificate spans several epochs. The verifier requires
$\mathsf{AuthCom}(C,\rho,\Pi_\rho,\chi_\rho)=1$ for each and may impose an
acceptance horizon through checkpoint policy. That horizon applies to every
digest, not merely the target or anchor. With reusable signatures, no more
than $f$ accepted historical keys may become adversarial while an epoch
remains acceptable. Key erasure, signatures with forward security, or a later
checkpoint fixing historical digests is required when that continuing bound
cannot be maintained.

\paragraph{Availability.}
Verification cannot force an archive or signer to answer. Generation needs the
relevant BUD records, span maps, Merkle material, attestations, and uncached
committee-transition evidence. Missing objects can prevent production or
acceptance but cannot make an incorrect certificate verify. Hence finite delay
depends on $\mathsf A_{\mathrm{att}}(\lambda)$, continued object availability,
and authentication of every relevant epoch by $h+\lambda$; without them,
completeness is conditional on eventual retrieval and authentication.

\paragraph{Complete wire accounting.}
Let $\mathcal D(\pi)$ be the named digests and $\mathcal R(\pi)$ their distinct
committee epochs. Up to fixed headers,
\[
|\pi|_{\mathrm{wire}}
=|\pi|_{\mathrm{read}}
+\sum_{D\in\mathcal D(\pi)}|\mathsf{att}(D)|
+\sum_{\substack{\rho\in\mathcal R(\pi)\\
                  \text{$\rho$ not cached}}}
  (|\Pi_\rho|+|\chi_\rho|).
\]
Here $\mathsf{att}(D)$ includes the signature or aggregate and signer set.
$|\pi|_{\mathrm{read}}$ comprises records, entries, identifiers, and Merkle
authentication data. Our byte experiments exclude attestations,
committee-transition proofs, checkpoint data, and transport. Committee
descriptors and evidence can be cached across queries; a stateless verifier
adds them once per represented epoch.

\paragraph{Implementation work.}
After sorting, a BUD is a fixed number of data-parallel hash rounds over a
contiguous buffer already produced by execution, with no disk access on its
commitment path. A disk-backed state trie instead updates scattered interior
nodes. This explains the structural comparison; the measured configurations
and their limitations are given in \cref{sec:eval:commitcost,app:eval:q3:setup}.

\section{Window schedule}
\label{app:schedules}

The main construction uses the aligned hierarchy of \cref{sec:schedule}. Here
we compare it with three alternatives: committing only one SuperBUD size,
staggering several windows of each size, and closing a window of every size at
every block. \Cref{fig:schedules} shows the four schedules.

\paragraph{One SuperBUD size.}
The simplest baseline fixes a size $m$ and commits the disjoint intervals
$[H_0+jm,\,H_0+(j+1)m-1]$ for $j\ge0$. It adds only $1/m$ digests per block,
but gaps not covered by these SuperBUDs must be filled with BUDs.

\paragraph{$\varphi$-phase schedule.}
At level $\ell$, a window has size $E=e^\ell$. Aligned schedule starts one
such window every $E$ blocks. Full striding starts one every $E/e$ blocks,
giving $e$ possible schedules of each size. A $\varphi$-phase hierarchy
commits $\varphi$ of these schedules, including phase zero and spacing the
remaining phases as evenly as possible. Thus
$\varphi=1$ is aligned and $\varphi=e$ is fully strided. Every placed window
is the disjoint union of $e$ aligned windows from the level below and uses the
same merge rule.

\paragraph{Per-block sliding.}
This schedule closes one window of every size at every height. Moving
$[s,t]$ forward by one block uses
\[
M_{[s+1,t]}=\{\,k\mapsto m\in M_{[s,t]}:m>s\,\}.
\]
A validator removes entries whose last modification is exactly $s$, applies
$\Delta_{t+1}$, and rebuilds the new Merkle tree. The resulting proofs are small, but
rebuilding $L$ roots at every height makes this schedule expensive.

\begin{figure}[t]
\centering
\begin{tikzpicture}[
  x=0.5cm, y=1cm,
  win/.style={draw, thin},
  bud/.style={draw, thin, fill=gray!25},
  ph/.style={draw, thin, fill=gray!8},
  lab/.style={font=\tiny, anchor=east},
  ttl/.style={font=\small, anchor=west},
  ann/.style={font=\tiny}
]

\begin{scope}[shift={(0cm,0cm)}]
  \node[ttl] at (-0.8,2.30) {(a) one size, $m=4$};
  \foreach \i in {0,...,7} \draw[bud] (\i,0) rectangle (\i+1,0.40);
  \foreach \j in {0,1}     \draw[win] (4*\j,1.10) rectangle (4*\j+4,1.50);
  \node[lab] at (-0.3,0.20) {$1$};
  \node[lab] at (-0.3,1.30) {$4$};
\end{scope}

\begin{scope}[shift={(6.4cm,0cm)}]
  \node[ttl] at (-0.8,2.30) {(b) aligned, $e=2$ ($\varphi=1$)};
  \foreach \i in {0,...,7} \draw[bud] (\i,0) rectangle (\i+1,0.40);
  \foreach \j in {0,...,3} \draw[win] (2*\j,0.55) rectangle (2*\j+2,0.95);
  \foreach \j in {0,1}     \draw[win] (4*\j,1.10) rectangle (4*\j+4,1.50);
  \draw[win] (0,1.65) rectangle (8,2.05);
  \node[lab] at (-0.3,0.20) {$1$};
  \node[lab] at (-0.3,0.75) {$2$};
  \node[lab] at (-0.3,1.30) {$4$};
  \node[lab] at (-0.3,1.85) {$8$};
\end{scope}

\begin{scope}[shift={(0cm,-3.2cm)}]
  \node[ttl] at (-0.8,2.30) {(c) $\varphi$-phase, $e=2$, $\varphi=2$};
  \foreach \i in {0,...,7} \draw[bud] (\i,0) rectangle (\i+1,0.40);
  \foreach \j in {0,...,3} \draw[win] (2*\j,0.55) rectangle (2*\j+2,0.73);
  \foreach \j in {0,1,2}   \draw[ph]  (2*\j+1,0.77) rectangle (2*\j+3,0.95);
  \foreach \j in {0,1}     \draw[win] (4*\j,1.10) rectangle (4*\j+4,1.28);
  \draw[ph]  (2,1.32) rectangle (6,1.50);
  \draw[win] (0,1.65) rectangle (8,1.83);
  \node[lab] at (-0.3,0.20) {$1$};
  \node[lab] at (-0.3,0.75) {$2$};
  \node[lab] at (-0.3,1.30) {$4$};
  \node[lab] at (-0.3,1.74) {$8$};
\end{scope}

\begin{scope}[shift={(6.4cm,-3.2cm)}]
  \node[ttl] at (-0.8,2.30) {(d) per-block sliding};
  \foreach \i in {0,...,7} \draw[bud] (\i,0) rectangle (\i+1,0.40);
  \foreach \i in {0,1,2} \draw[win] (\i,0.55+0.13*\i) rectangle (\i+2,0.66+0.13*\i);
  \node[ann] at (4.8,0.85) {$\cdots$};
  \foreach \i in {0,1,2} \draw[win] (\i,1.10+0.13*\i) rectangle (\i+4,1.21+0.13*\i);
  \node[ann] at (6.8,1.40) {$\cdots$};
  \draw[win] (0,1.65) rectangle (8,1.76);
  \node[ann] at (8.6,1.83) {$\cdots$};
  \node[lab] at (-0.3,0.20) {$1$};
  \node[lab] at (-0.3,0.85) {$2$};
  \node[lab] at (-0.3,1.40) {$4$};
  \node[lab] at (-0.3,1.70) {$8$};
\end{scope}

\end{tikzpicture}
\caption{Four ways to place SuperBUD windows over eight heights. Dark boxes
are BUDs. (a) One fixed SuperBUD size. (b) The aligned hierarchy used in the
main construction. (c) Two staggered schedules per size; for $e=2$, this is
full striding. (d) A window of every size ending at every height.}
\label{fig:schedules}
\end{figure}

\begin{lemma}[Alternative schedule costs]\label{lem:altsched}
Fix $\eta=e^L$ and count attested digests beyond the BUD, amortized per block.
\begin{enumerate}[label=(\roman*),leftmargin=2.2em,itemsep=1pt]
\item \emph{One SuperBUD size $m$.} The digest rate is $1/m$, archive growth
is at most twice the write log, and a cover uses at most
$\lfloor d/m\rfloor+2(m-1)$ windows. This bound is minimized at
$m=\Theta(\sqrt d)$.
\item \emph{$\varphi$-phase schedule.} The digest rate is
$\varphi(1-e^{-L})/(e-1)$ and archive growth is at most
$\varphi(L+1)$ times the write log. All aligned windows remain available.
Writing $q=\lceil e/\varphi\rceil$, if $\varphi\ge2$ and
$1\le d\le(e-q)\eta/e$, every $a$ admits a committed window of size less
than $\tfrac{e^2}{e-q}d$ that contains $[a,h]$ and closes fewer than that
many blocks after $h$.
\item \emph{Per-block sliding.} The digest rate is $L$, worst-case archive growth is
$\Theta(\eta)$ times the write log, and worst-case build work is
$\Theta(w\eta)$ per block. Once $h\ge H_0+\eta-1$, every
$1\le d\le e^L$ admits a two-digest certificate with no wait beyond attestation.
\end{enumerate}
\end{lemma}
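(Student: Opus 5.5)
The plan is to treat the three schedules separately. For each one I would check three things in turn: the digest rate, the archive growth, and the cover or wait property. Throughout I use the merge identity $M_W=\bigvee_i M_{W_i}$ of \cref{app:superbud-details}, and the fact that a span map holds exactly one entry per key modified in its window. Archive growth then reduces to one question: in how many committed windows can a fixed write $(k,n)$ be the final modification of $k$?

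For (i), disjoint size-$m$ windows close once every $m$ blocks, which gives rate $1/m$. A write is final in at most one window, namely the one containing $n$, so it appears in its BUD and at most one span map. That bounds growth by twice the write log. For the cover I would use \textsc{Cover} with the anchor BUD $U_a$. First clear the heights above $a$ up to the next window boundary with at most $m-1$ singleton BUD exclusions. Then use at most $\lfloor d/m\rfloor$ full-window exclusions. Finally use at most $m-1$ BUD exclusions from the last boundary through $h$. Minimizing $d/m+2m$ over $m$ gives $m=\Theta(\sqrt d)$.

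For (ii), each phase is a translate of the aligned level-$\ell$ schedule for $\ell\ge1$, so each level contributes $\varphi/e^\ell$ digests per block. Summing over $\ell=1,\dots,L$ reproduces $\varphi(1-e^{-L})/(e-1)$. A write is final in at most one window per phase per level, so growth is at most $1+\varphi L\le\varphi(L+1)$. Phase zero is included, so the aligned windows survive.

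The main step in (ii) is the covering claim. Level-$\ell$ window starts lie on multiples of $e^{\ell-1}$ (offset by $H_0$), and $\varphi$ of the $e$ residues are used with even spacing. Hence consecutive committed starts differ by at most $q e^{\ell-1}$. A window $[s,s+e^\ell-1]$ contains $[a,h]$ iff $a-(e^\ell-d)<s\le a$, and this range has length $e^\ell-d$. If $d\le(e-q)e^{\ell-1}$, that length is at least $qe^{\ell-1}$, so some committed start lies in the range. I would choose the least such $\ell$. Then either $\ell=1$, or $(e-q)e^{\ell-2}<d$; in both cases $e^\ell<\tfrac{e^2}{e-q}d$, using $d\ge1$ and $q<e$ (the latter holds because $\varphi\ge2$). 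The hypothesis $d\le(e-q)\eta/e=(e-q)e^{L-1}$ guarantees $\ell\le L$. The window closes at $s+e^\ell-1\le a+e^\ell-1$, which is fewer than $e^\ell$ blocks after $h$. I expect the obstacle here to be the boundary bookkeeping: stating precisely what ``spaced as evenly as possible'' means for the residues, so that the maximal cyclic gap is exactly $\lceil e/\varphi\rceil$. Near $H_0$, where early non-zero phases may be missing, I would fall back on phase zero.

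For (iii), one window of each size $e^1,\dots,e^L$ closes per block, which gives rate $L$. In the worst case every block writes the same $w$ keys, so each top-size map holds on the order of $w$ entries. Each write is then final in up to $\eta$ windows, and rebuilding maps of size up to $\Theta(w\eta)$ gives the stated $\Theta(\eta)$ growth and $\Theta(w\eta)$ build work. I would exhibit that worst case alongside the upper bounds. For the two-digest certificate, take the smallest $\ell$ with $e^\ell\ge d+1$ and use the window $[h-e^\ell+1,h]$. This needs $h\ge H_0+\eta-1$ so that the window lies at or above $H_0$. The window contains $a$, and since $\Mod(k)\cap(a,h]=\emptyset$ its map holds $k\mapsto a$. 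So $U_a$ plus that single inclusion is an S4 certificate closing at $h$. The edge case $d=e^L$ needs $e^\ell>e^L$. I would handle it by checking whether the convention admits a size-$(\eta+1)$ window, or else tighten the statement to $d<e^L$.
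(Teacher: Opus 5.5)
\textbf{Verdict: genuine gap in part (iii).} Parts (i) and (ii) follow the paper's proof essentially step by step: the same three-piece single-size cover, and the same count of maximal committed-start gap against the $e^\ell-d$ admissible starts. Your treatment of the $\ell=1$ case, the closing delay and starts near $H_0$ is more explicit than the paper's. Part (iii) has two concrete problems.

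\emph{The endpoint $d=e^L$.} Your two-digest construction requires a window that strictly contains $a$. That forces $e^\ell\ge d+1$, so it fails at $d=e^L$, and you propose weakening the statement. As written, your argument therefore does not prove the lemma at that endpoint. The missing idea is that \textsc{Cover} also accepts an \emph{exclusion} path against a window lying entirely above $a$. Take $E=e^{\lceil\log_e d\rceil}\le e^L$ and the window $[h-E+1,h]$, which exists once $h\ge H_0+\eta-1$.
\begin{itemize}
\item If $E>d$, the window contains $a$, and $\Mod(k)\cap(a,h]=\emptyset$ gives the inclusion $k\mapsto a$.
\item If $E=d$ (in particular $d=e^L$), the window is exactly $(a,h]$. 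An exclusion path against it, together with the anchor in $U_a$, is the two-digest certificate, closing at $h$.
\end{itemize}
This is the paper's route, and it needs no size-$(\eta+1)$ window.

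\emph{The worst-case witness.} Your witness for the $\Theta(\eta)$ archive growth and $\Theta(w\eta)$ build work is actually the best case. If every block writes the same $w$ keys, a write at height $n$ is final for its key only in the $L$ windows ending at $n$. Every sliding map then holds exactly $w$ entries, so archive growth is only $L$ times the write log and rebuild work is $O(Lw)$ per block. Your paragraph is also internally inconsistent: it asserts top-size maps of order $w$ entries and, in the next sentence, maps of size $\Theta(w\eta)$. The lower halves of both $\Theta$ claims need writes to distinct keys, as the paper states.
\begin{itemize}
\item A write at $n$ is then final in every one of the $\sum_{\ell=1}^{L}e^\ell=\Theta(\eta)$ sliding windows containing $n$. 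That count is up to about $e\eta/(e-1)$, not $\eta$.
\item The size-$e^\ell$ map has $we^\ell$ entries, so rebuilding all $L$ maps costs $\Theta\bigl(w\sum_{\ell}e^\ell\bigr)=\Theta(w\eta)$ per block.
\end{itemize}
The matching upper bounds come from the same window count for arbitrary write patterns.
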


\begin{proof}
(i) Use at most $m-1$ BUDs before the first size-$m$ boundary,
$\lfloor d/m\rfloor$ complete SuperBUD windows, and at most $m-1$ BUDs in the
tail. Every window closes by $h$, and a write enters at most one fixed-size
span map. Minimizing the bound gives $m=\Theta(\sqrt d)$.

(ii) A level-$\ell$ window beginning at a multiple of $e^{\ell-1}$ is the
disjoint union of $e$ aligned level-$(\ell-1)$ windows. Level $\ell$ closes
$\varphi$ windows every $e^\ell$ blocks, and a write appears in at most
$\varphi$ maps at that level, giving the stated digest and archive rates. Let
$E=e^\ell$. The gap between committed starts is at most $qE/e$, while a
window of size $E$ covers $[a,h]$ when its start lies in
$[h-E+1,a]$, an interval of $E-d$ possible starts. A committed start exists
whenever $E-d\ge qE/e$, or $E\ge\tfrac{e}{e-q}d$. For $\varphi\ge2$ we have
$q<e$, and the stated bound on $d$ ensures that $E=\eta$ satisfies the
inequality. The least configured power of $e$ satisfying it is below
$\tfrac{e^2}{e-q}d$, with the same asymptotic closing delay.

(iii) A height belongs to $e^\ell$ sliding windows at level $\ell$, giving
worst-case archive growth $\Theta(e^L)=\Theta(\eta)$, attained when writes
use distinct keys. Closing every level at every height costs
$O(w\sum_{\ell\le L}e^\ell)=O(w\eta)$ per block, with the same worst-case
tightness. Repeated writes to a small key set can cost less. For
$E=e^{\lceil\log_e d\rceil}$, the size-$E$ window ending at $h$ either equals
$(a,h]$ and supplies one exclusion path, or contains $a$ and supplies one
inclusion path for $k\mapsto a$. The window exists only if
$h-E+1\ge H_0$, which is guaranteed after the stated ramp-up. Earlier
queries can use the staircase or individual BUD exclusions.
\end{proof}

\paragraph{Choosing the schedules.}
Committing one SuperBUD size minimizes the digest rate, but its cover grows
linearly with $d$ once $m$ is fixed. Per-block sliding gives two-digest proofs
after ramp-up without waiting, but its worst-case $\Theta(w\eta)$ build cost rules it out for the
validator path. The $\varphi$-phase family lies between these extremes.
Aligned schedule ($\varphi=1$) is cheapest, but an anchor immediately below a
boundary can miss every covering window. For $\varphi\ge2$ and
$d\le(e-\lceil e/\varphi\rceil)\eta/e$, a window of size $O(d)$ covers
$[a,h]$ and closes within $O(d)$ blocks. Either that window yields S4 or an
earlier rewrite yields S3. Within this configured reach, $\varphi=2$ is the
cheapest bounded-wait schedule; its digest rate remains below one per block for
$e\ge3$. \Cref{sec:eval:q5} measures aligned schedule, full striding, and the
one-size baseline. Sliding is evaluated analytically.

\section{Evaluation details}
\label{app:eval:q3}

The main results rely on three checks: every backend replays the same blocks,
every applicable proof strategy is built for each sampled query, and QMDB's
reclamation is verified to be active before its cost is compared. This appendix
gives those checks and the full results behind \cref{sec:eval}.

\subsection{Systems, workloads, and sampling}
\label{app:eval:q3:setup}

All backends implement one replay interface. \textsc{bud} applies writes to
flat state and builds the BUD, rather than measuring commitment alone. The
in-memory \textsc{mpt} is simplified and omits RLP and Ethereum's separate
account and storage tries, deliberately favoring it; the disk version is
content addressed and capped at a $16$\,MiB cache. NOMT and QMDB retain their
defaults. QMDB reclamation does not activate under those defaults in our runs;
\cref{sec:eval:q7} lowers only its live-entry threshold in a separately marked
experiment.

The prototype implements flat-state replay, previous-write pointers,
BUD and SuperBUD construction, and hash-path certificate checks. Initial-state
activation, the specified digest domain separation and canonical tree encoding,
and end-to-end attestation verification are not implemented. Certificate checks
trust the supplied digests; the BLS benchmark below measures signatures
separately. The results characterize construction cost and proof shape under
the implemented encoding. They do not constitute an integrated implementation
of the full formal protocol.

Each commitment-sweep point prepopulates $N$ keys and times $128$ complete
block replays, including state application and commitment but excluding
attestation, signing, and certificate generation. We report p50 with p99
whiskers. Runs use an Apple M4 with $16$\,GB RAM. Certificate and hierarchy
experiments run for $6\eta$ blocks, so every level closes repeatedly; their
verification timings cover warm hash paths. In a separate BLS12-381 benchmark,
a client using preaggregated keys for $67$ of $100$ authenticated signers
spent about $0.65$\,ms on the two batch-verified aggregate signatures for S3 or
S4. Committee authentication and signer-key aggregation are outside that
timed operation.

The synthetic certificate workload uses $N=10^5$ keys and $w=50$ distinct
writes per block, sampled from a Zipf distribution with $s=1$. It runs for
$6\eta=98{,}304$ blocks with $\eta=16{,}384$, $e=4$, $L=7$, and a $2\%$
deletion rate. For every sampled anchor and staleness $d$, we build every
applicable strategy rather than only the cheapest.

Event sampling draws a replay event, either a synthetic modification or an
account occurrence in the Ethereum trace, and therefore weights keys by their
observed frequency. Uniform-key sampling first draws uniformly from keys
observed in the workload and then selects one of that key's events. The latter
gives rarely recurring keys equal weight without assuming that client queries
follow the event distribution.

\subsubsection{Ethereum account-access trace.}
\label{app:eval:trace}

We replay $24{,}576$ blocks spanning approximately $3.4$ days, from
height $25{,}800{,}000$ through $25{,}824{,}575$. It uses $\eta=4{,}096$, $e=4$, and $L=6$.
From public RPC data we mark transaction senders and recipients, fee and
withdrawal recipients, created contracts, and log-emitting contracts, and
replay each distinct marked account as one putative account write. The trace
contains an average of $445$ distinct marked accounts per block.

This construction captures observed account access, with two limits. First,
a marked account need not have changed, while an account changed through
unobserved internal execution need not be marked. Second, the RPC data omits
storage-slot writes, account deletions, and post-execution values. It therefore
does not supply the execution-complete block-level access list specified by
EIP-7928~\cite{eip7928}. Our Ethereum results measure how this observed account
sequence affects path depth, recurrence gaps, and strategy availability. They
do not estimate full-EVM write volume, deletion behavior, or end-to-end client
cost. Only the synthetic workload exercises tombstone anchors and S2.

Addresses are mapped injectively to dense identifiers for replay indexing.
This preserves equality patterns, per-block distinct-account counts, and
recurrence gaps, and hence the strategy-availability statistics and the path
depths of our balanced positional Merkle trees. It does not purport to
preserve native address serialization, production Patricia-trie shape, or
database locality. Moreover, the reported byte totals serialize the
benchmark's fixed-width record representation, not native Ethereum account and
storage keys. The Ethereum trace's certificate sizes should therefore be read as
hash-path costs under that representation, rather than production Ethereum
wire sizes; native-width keys change the explicit opened-leaf payload but not
the number of path hashes.

\subsection{Certificate costs}
\label{app:eval:q3:costs}

For the read-layer payload defined in \cref{sec:limits}, the synthetic
workload's S1 and S2 cost $219$--$227$ bytes, and S3 remains at
$452$--$454$ bytes, and S4 grows from $515$ to $790$ bytes as its SuperBUDs
contain more keys. S5 grows logarithmically from $901$ to $11{,}984$ bytes.
Every S5 exclusion count satisfies \cref{eq:staircase}; warm hash-path verification
reaches at most $146\,\mu\mathrm{s}$ at p99. Under the same benchmark record
encoding, our prototype \textsc{mpt} produces a $2{,}087$-byte inclusion proof
for the easier current-value query. S3 and S4 remain below that size, while
S5 exceeds it at larger staleness.

On the Ethereum trace, S3 costs $677$--$710$ bytes. Its larger per-block marked-account
sets produce deeper BUD paths than the synthetic workload, as predicted by the
$O(\log w_n)$ anchor cost in \cref{sec:costs}. At $d=64$, the mean S5 cover is
approximately $5.5$ windows in both workloads and under both sampling rules
(\cref{tab:eval:crosswork}). Write history therefore changes which strategy
applies much more than it changes cover size at a fixed staleness.

\subsection{Strategy availability}
\label{app:eval:q3:strategies}

\Cref{fig:eval:mix,fig:eval:mixeth} report the cheapest applicable strategy.
These shares differ from S4 availability in \cref{fig:eval:q3cost}(b), which
counts every query for which S4 exists, even when another strategy is cheaper.

On the synthetic workload, S3 serves most queries at $d=1$. S4 is more useful
for keys without a nearby successor; its availability falls with staleness
from $90.6\%$ to $32.9\%$ under event sampling and from $99.9\%$ to
$42.9\%$ under uniform-key sampling.

The Ethereum trace contains a much larger population without observed
successors. Repeated appearances have a median gap of $8$ blocks, yet $50.4\%$
of its $1.68$ million marked accounts appear only once. That fraction describes
the $3.4$-day observation window, not a steady state; a longer trace may reveal
later writes.

Under uniform-key sampling on the Ethereum trace, S4 is the most common strategy at every
sampled $d<\eta$ and retains a majority through $d=2{,}048$. At $d\ge\eta$,
S4 availability reaches zero because no configured level spans the interval.
This does not rule out a proof: the anchor at $a$ remains valid
(\cref{sec:security}), and S5 remains available from archived digests.

\begin{table}[t]
\caption{Workload comparison at matched staleness. Strategy shares name the
cheapest applicable strategy; S4 availability counts S4 whenever it exists.
All listed stalenesses lie below both retention windows.}
\label{tab:eval:crosswork}
\centering
\small
\setlength{\tabcolsep}{5pt}
\begin{tabular}{lcc@{\hskip 14pt}cc}
\toprule
& \multicolumn{2}{c}{Write-event sampling}
& \multicolumn{2}{c}{Uniform-key sampling} \\
\cmidrule(r){2-3}\cmidrule(l){4-5}
& synthetic & Ethereum & synthetic & Ethereum \\
\midrule
S3 share at $d=1$           & 97.0\% & 79.0\% & 84.2\% & 28.4\% \\
S4 share at $d=1$           & 3.0\%  & 21.0\% & 15.8\% & 71.5\% \\
S4 availability at $d=512$ & 69.2\% & 82.3\% & 84.6\% & 88.5\% \\
S5 windows at $d=64$       & 5.55   & 5.52   & 5.47   & 5.61   \\
\bottomrule
\end{tabular}
\end{table}

\begin{figure}[t]
\centering
\includegraphics[width=0.82\linewidth]{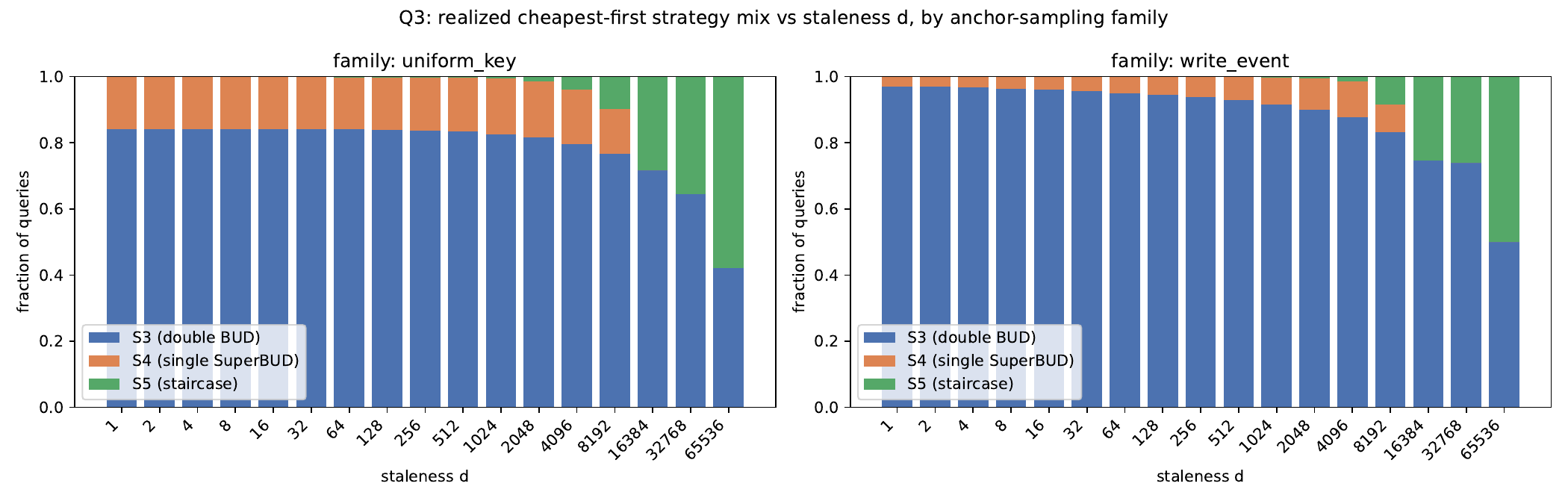}
\caption{Cheapest applicable strategy by staleness on the synthetic workload
($\eta=16{,}384$, $e=4$, $L=7$, $w=50$), under uniform-key sampling (left)
and event sampling (right).}
\label{fig:eval:mix}
\end{figure}

\begin{figure}[t]
\centering
\includegraphics[width=0.82\linewidth]{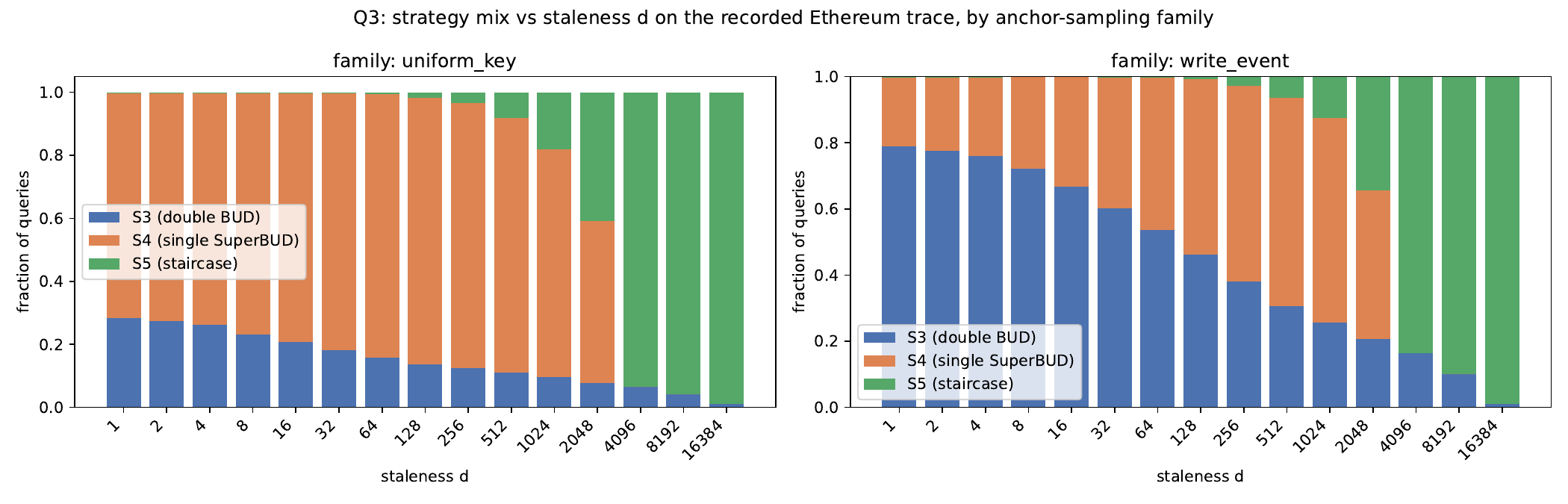}
\caption{Cheapest applicable strategy by staleness on the Ethereum account-access
trace
($\eta=4{,}096$, $e=4$, $L=6$, $w\approx445$), under uniform-key sampling
(left) and event sampling (right). Accounts observed only once favor S4
over S3 under uniform-key sampling.}
\label{fig:eval:mixeth}
\end{figure}

\subsection{Hierarchy parameters}
\label{app:eval:params}

The sweep in \cref{tab:eval:params} fixes $N=10^5$, $w=50$,
$\eta=4{,}096$, and $d\in[64,128)$, and runs each configuration for $6\eta$
blocks. Larger $e$ means fewer maintained levels. In every run, the measured
digest rate equals $(1-e^{-L})/(e-1)$ to the reported precision, and every S5
exclusion count satisfies \cref{eq:staircase}.

\begin{table}[t]
\caption{Hierarchy trade-offs at $\eta=4{,}096$ and $d\in[64,128)$, with
$w=50$. Extra digests exclude the mandatory BUD; block p99 covers state
updates, BUD construction, and hierarchy maintenance; S5 windows is the mean
cover size over queries in the stated staleness range.}
\label{tab:eval:params}
\centering
\small
\setlength{\tabcolsep}{4pt}
\begin{tabular}{@{}ccccc@{}}
\toprule
$(e,L)$ & \shortstack{Extra\\digests/blk}
& \shortstack{Block p99\\($\mu\mathrm{s}$)}
& \shortstack{S5\\windows}
& \shortstack{S4\\avail.} \\
\midrule
$(2,12)$ & 0.999756 & 2{,}485.8 & 4.51 & 79.9\% \\
$(4,6)$  & 0.333252 & 1{,}517.9 & 6.01 & 76.5\% \\
$(8,4)$  & 0.142822 & 1{,}235.0 & 8.52 & 69.8\% \\
\bottomrule
\end{tabular}
\end{table}

\subsection{Skew sensitivity}
\label{app:eval:skew}

We repeat the synthetic experiment with $s\in\{0.8,1,1.2\}$ at
$(e,L)=(4,6)$. Across all queries, mean S5 size changes by $35\%$, largely
because greater skew produces fresher anchors. Restricting the comparison to
$d\in[64,128)$ reduces the variation to $5.6\%$; the mean cover remains
between $5.98$ and $6.01$ windows. Skew therefore changes anchor age much more
than proof cost at a fixed staleness.

\subsection{Schedule sweep}
\label{app:eval:schedules}

The schedule comparison in \cref{tab:eval:sched} replays the same two
$4{,}096$-block traces under the aligned schedule, full striding, and one
SuperBUD size $m=64$. The synthetic trace has $w=50$; the Ethereum segment
averages $w\approx478$. We do not benchmark $\varphi=2$ separately because
its bounded-wait guarantee follows from \cref{lem:altsched}(ii). Per-block
sliding remains analytical because its worst-case build work is $\Theta(w\eta)$.

S4 availability in \cref{tab:eval:sched} is aggregated over log-uniform valid
historical queries at the head of a one-window replay and is not directly
comparable with the per-staleness values in \cref{fig:eval:q3cost}(b). For each
retained anchor, the sampler records its first later modification, rejects a
draw that crosses that successor or the replay head, and verifies the complete
staircase certificate. Wait medians include only queries for which S4 appears
before the replay ends; queries still waiting are excluded. These conditions
apply equally to all three schedules.

\begin{table}[t]
\caption{Schedule comparison over one $4{,}096$-block replay, at
$\eta=4{,}096$ and $e=4$, with the single level sized
$m=64=\sqrt{\eta}$. Within each workload, all schedules replay the same trace.
Extra digests exclude the mandatory BUD. S4 availability is measured at query
time. Wait is in blocks, and its median includes only queries for which S4
becomes available before the replay ends.}
\label{tab:eval:sched}
\centering
\small
\setlength{\tabcolsep}{4pt}
\begin{tabular}{lccccc}
\toprule
Schedule & \shortstack{Extra\\digests/blk}
& \shortstack{Archive\\B/blk}
& \shortstack{S5 windows\\p50/p99}
& \shortstack{S4\\avail.}
& \shortstack{Wait p50\\(blocks)} \\
\midrule
\multicolumn{6}{l}{\emph{Synthetic Zipf trace} ($w=50$)} \\
$\varphi=1$ (aligned) & 0.3333 & 4{,}252  & 4/12  & 83.7\% & 14 \\
$\varphi=e$ (strided) & 1.3286 & 11{,}810 & 4/12  & 96.0\% & 2  \\
Single level          & 0.0156 & 1{,}959  & 13/71 & 46.1\% & 23 \\
\addlinespace
\multicolumn{6}{l}{\emph{Ethereum trace, first $\eta$ blocks}
($w\approx478$)} \\
$\varphi=1$ (aligned) & 0.3333 & 41{,}904  & 4/12  & 83.3\% & 14 \\
$\varphi=e$ (strided) & 1.3286 & 110{,}813 & 4/12  & 96.6\% & 2  \\
Single level          & 0.0156 & 20{,}215  & 10/72 & 39.6\% & 22 \\
\bottomrule
\end{tabular}
\end{table}

\subsection{QMDB reclamation under forced activation}
\label{sec:eval:q7}

\begin{figure}[t]
\centering
\begin{minipage}[t]{0.53\linewidth}
  \centering
  \includegraphics[width=\linewidth]{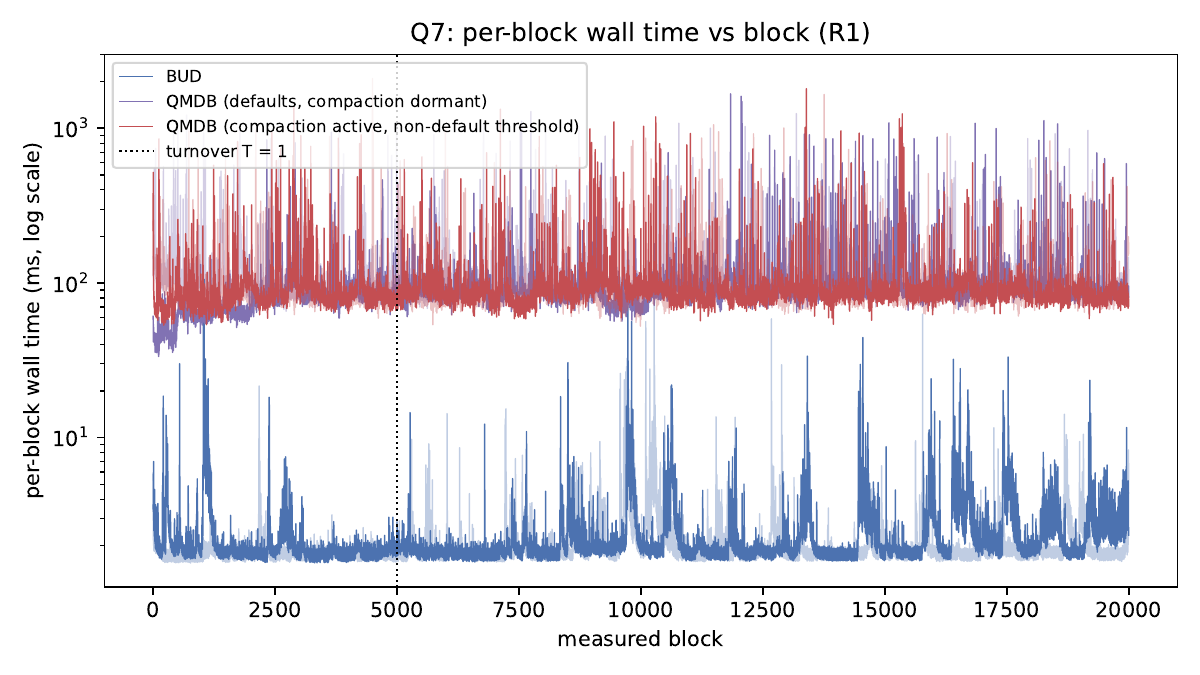}
  {\small (a) Per-block wall time.}
\end{minipage}\hfill
\begin{minipage}[t]{0.45\linewidth}
  \centering
  \includegraphics[width=\linewidth]{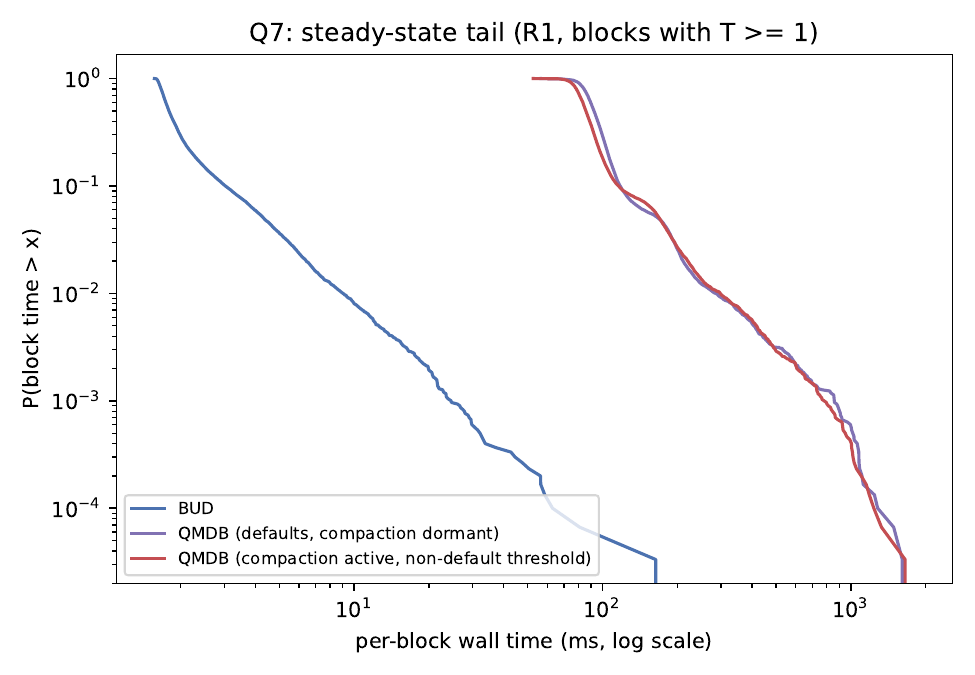}
  {\small (b) Steady-state tail.}
\end{minipage}
\caption{Long-run replay at $N=10^7$, $w=2{,}000$, and $2\%$ deletions over
$20{,}000$ measured blocks. QMDB's default reclamation remains inactive.
Lowering the live-entry threshold to force reclamation changes latency by less
than the between-repetition range but yields $2.6\times$ greater database-file
growth. The base-BUD replay is shown only for context: unlike QMDB's complete
authenticated storage engine, it excludes SuperBUD maintenance, attestation,
and signing.}
\label{fig:eval:q7}
\end{figure}

The short commitment sweeps leave a central maintenance question unanswered:
what happens when QMDB reclaims its append-only files? It moves live entries
out of superseded regions in per-shard updater threads on which block
completion waits. This coupling motivates the experiment; its latency cost
must be measured.

We test QMDB v0.2.0, commit \texttt{f14a2a0}. Its default live-entry gate is
$20{\cdot}10^6$ entries per shard, roughly $3.2{\cdot}10^8$ keys across $16$
evenly populated shards, and is checked before the utilization rule. Our runs
reach only $N=10^7$. We track the oldest-active serial-number and last-pruned
twig watermarks, which remain flat under the default configuration. QMDB's
published evaluation includes much larger insertion workloads~\cite{qmdb};
our experiment isolates reclamation under the stated configuration and
deletion workload, without claiming to reproduce that scale.

We replay $20{,}000$ blocks from a state of $N=10^7$ keys, with $w=2{,}000$
writes per block and $2\%$ deletions. Cumulative writes turn the state over four
times; each configuration runs twice on the same trace. BUD collects tombstones
at $\eta=4{,}096$. Under QMDB's defaults, reclamation watermarks remain flat
across all $40{,}000$ measured blocks, confirming that no reclamation occurs.
Steady-state QMDB cost is $51.2\,\mu\mathrm{s}$ per write, against
$17.2\,\mu\mathrm{s}$ at the same $w$ in \cref{fig:eval:wsweep}(b). Since
reclamation watermarks remain flat, the difference cannot be attributed to
reclamation; the compared points also differ in state size. The base-BUD path
is $1.1\,\mu\mathrm{s}$ per write, but that is not a functionally matched
full-system comparison.

To measure active reclamation, we lower the threshold to
$5{\cdot}10^5$ entries per shard, the only non-default setting. QMDB's
utilization rule then activates reclamation during prepopulation, and it
continues throughout the replay: the summed oldest-active serial-number
watermark advances by $1.07{\cdot}10^8$ and the pruning watermark by roughly
$56{,}000$ twigs per repetition. The serial-number delta bounds retired entries;
it is not a direct count of live-entry relocations. Its steady-state cost
is $48.7\,\mu\mathrm{s}$ per write, within the inactive configuration's $5.7\%$
between-run spread. Per-block p99.9 is $0.63$--$0.84$\,s with reclamation
active, against $0.74$--$0.88$\,s when inactive.

The clearest measured difference is database-file growth. The active run grows
by $7.2$\,GiB, against $2.7$\,GiB when inactive, and ends at $8.6$\,GiB rather
than $4.0$\,GiB. These deltas measure database files retained on disk. Physical device writes
were not instrumented, so the result establishes greater file growth rather
than device-level write amplification.

Heavy per-block tails appear in both QMDB configurations and in the BUD replay,
so these data do not attribute the tails to reclamation. Steady-state p99.9
exceeds p50 by $7$--$10\times$ for QMDB and by $13\times$ for the base-BUD
path. Rerunning the dense point from \cref{fig:eval:wsweep}(b) at $N=10^7$ for
$400$ blocks of $10^5$ writes gives $29.6\,\mu\mathrm{s}$ per write, again
with flat reclamation watermarks. The improvement relative to QMDB's
smaller-batch run is consistent with amortization of per-block overhead; it is
not evidence about reclamation cost.

At $N=10^7$, QMDB amortizes active reclamation without a detectable latency
penalty, while its database files grow faster. This supports the main text's
storage tradeoff and limits its performance claim: relocation shares the
block-completion path, but we have not shown that it becomes a bottleneck.
The open measurement is sustained reclamation near the default live-entry
gate, where both latency and storage behavior may differ.

\section{Deployment}
\label{app:deployment}

\paragraph{Activation.}
The activation procedure in \cref{sec:limits} requires consensus rules for
chunk assignment, scan order, and the ordering of synthetic and ordinary
writes within a block. Every inherited entry starts with marker $\uninit$.
An activation write preserves the current value and emits that marker; an
ordinary write that activates the key first makes the scan skip it. Neither
first record may carry $-1$. The implementation must retain activation status
or postpone tombstone collection until the scan considers the key, so a
previously activated key is not mistaken for an untouched one.

Choosing chunks of roughly the normal per-block write volume keeps activation
work comparable to ordinary traffic. For example, at $N=10^8$ and
$w=2{\cdot}10^4$, one such pass takes approximately $5{,}000$ blocks; the
configured hierarchy then needs at most one top-level window to reach its
post-activation steady state. A key's first record can anchor claims from that
record's height onward. Its $\uninit$ marker cannot establish the key's value
at any earlier height.

Lazy activation avoids the scan: a legacy key activates on its first later
ordinary write or touch, which likewise emits $\uninit$. This leaves untouched
members of $\sigma_0$ outside the anchored query domain indefinitely, so lazy
activation has no finite cutover time and cannot answer their earlier reads.

\paragraph{Optional checkpoints.}
The construction needs no standalone state root. A deployment may still attest
periodic snapshots for state sync or as a current-state bridge anchor. A usable
snapshot contains the live state, the per-entry metadata of
\cref{sec:metadata}, and either the open hierarchy or enough recent logs to
rebuild it. It does not replace the modification history required for
historical queries.

\paragraph{Archive incentives.}
The availability assumption is stated in \cref{app:limits}. Replication,
sharding, payments, and availability bonds are deployment choices; keeping
every key range retrievable remains open.

\paragraph{Migration and rollback.}
A migration may run the trie and BUDs together. After a complete activation
scan, allow at most another $e^L\le\eta$ blocks for hierarchy construction.
Cutover replaces proofs for anchored queries once their evidence is available;
never-written absent keys remain outside $Q$, and earlier heights can use
archived roots. Lazy activation has no finite cutover deadline. Rollback
reconstructs the trie from flat state and runs both systems while new trie
roots accumulate.
\end{document}